\providecommand{\keyword}[1]{\textbf{Keywords: } #1}
\def \FIG {{Fig$.\;$}}
\newcommand{\be}{\begin{equation}}
\newcommand{\ee}{\end{equation}}
\newcommand{\ba}{\begin{eqnarray*}}
\newcommand{\ea}{\end{eqnarray*}}
\newcommand{\bi}{\begin{itemize}}
\newcommand{\ei}{\end{itemize}}
\newtheorem{thm}{\bf{Theorem}}[section]
\newtheorem{lem}{\bf{Lemma}}[section]
\newtheorem{rem}{\bf{Remark}}[section]
\newcommand{\comments}[1]{}
\begin{document}
\title{Minimization of Transformed $L_1$ Penalty: Closed Form Representation and 
Iterative Thresholding Algorithms}

\author{Shuai~Zhang, 
		    and~Jack~Xin
\thanks{
The work was partially supported by NSF grants DMS-0928427, DMS-1222507
and DMS-1522383. They are with the Department of Mathematics,  
University of California, Irvine, CA, 92697, USA. 
E-mail: (szhang3@uci.edu; jxin@math.uci.edu) 
Phone: (949)-824-5309. Fax: (949)-824-7993. }}


\maketitle

\begin{abstract}
The transformed $l_1$ penalty (TL1) functions are a one parameter family of
bilinear transformations composed with the absolute value function. When acting
on vectors, the TL1 penalty interpolates $l_0$ and $l_1$ similar to $l_p$ norm,  
where $p$ is in $(0,1)$. 
In our companion paper, we showed that TL1 is a robust sparsity promoting
penalty in compressed sensing (CS) problems for a broad range of incoherent 
and coherent sensing matrices. Here we develop an explicit fixed point 
representation for the TL1 regularized minimization problem. 
The TL1 thresholding functions are in closed form for all parameter values. 
In contrast, the $l_p$ thresholding functions ($p$ is in $[0,1]$) are 
in closed form only for $p=0,1,1/2,2/3$, known as hard, soft,  half, and 2/3
thresholding respectively. 
The TL1 threshold values differ in subcritical (supercritical) parameter regime 
where the TL1 threshold functions are continuous (discontinuous) similar to 
soft-thresholding (half-thresholding) functions. We propose TL1 iterative 
thresholding algorithms and compare them with hard and half thresholding 
algorithms in CS test problems. 
For both incoherent and coherent sensing matrices, a proposed TL1 iterative  
thresholding algorithm with adaptive subcritical and supercritical thresholds 
(TL1IT-s1 for short), consistently performs the best in sparse signal recovery 
with and without measurement noise. 
\end{abstract}

\keyword{
Transformed $l_1$ penalty, closed form thresholding functions, 
iterative thresholding algorithms, compressed sensing, robust recovery. }

\begin{AMS}
94A12, 94A15
\end{AMS}

\section{Introduction} \label{section:intro}
Iterative thresholding (IT) algorithms merit our attention in high dimensional 
settings due to their simplicity, speed and low computational costs. 
In compressed sensing (CS) problems \cite{candes2006stable,Don_06} under   
$l_p$ sparsity penalty ($p \in [0,1]$), the corresponding 
thresholding functions are in closed form 
when $p=0,\frac{1}{2},\frac{2}{3},1$. The $l_1$ algorithm is known as soft-thresholding 
\cite{soft-threshold-lp-daubechies2004iterative,Don_95}, and the $l_0$ algorithm 
hard-thresholding 
\cite{hard-sparsify-blumensath2012accelerated,hard-threshold-blumensath2008iterative}. 
IT algorithms only involve scalar thresholding and matrix multiplication. 
We note that the linearized Bregman algorithm \cite{Bregman:yin2008,YO_13} is similar 
for solving the constrained $l_1$ minimization (basis 
pursuit) problem. Recently, half and $\frac{2}{3}$-thesholding algorithms have been actively studied
\cite{xian-half-cao2013fast-image,xian-half} as non-convex alternatives to 
improve on $l_1$ (convex relaxation) and $l_0$ algorithms. 

However, the non-convex $l_p$ penalties ($p \in (0,1)$) are non-Lipschitz. 
There are also some Lipschitz continuous non-convex sparse penalties, 
including the difference of $l_1$ and $l_2$ norms (DL12)  \cite{ELX,l1-l2-yinminimization,l1-l2-lou2014computing}, 
and the transformed $l_1$ (TL1) \cite{DCATL1}. When applied to CS problems, 
the difference of convex function algorithms (DCA) of 
DL12 are found to perform the best for highly coherent 
sensing matrices. In contrast, the DCAs of TL1 are the most robust 
(consistently ranked in the top among existing algorithms) for 
coherent and incoherent sensing matrices alike.

In this paper, as companion of \cite{DCATL1}, we develop robust and effective IT algorithms 
for TL1 regularized minimization with evaluation on CS test problems. 
The TL1 penalty is a one parameter family of bilinear transformations 
composed with the absolute value function. The TL1 parameter, denoted 
by letter `$a$', plays a similar role as $p$ for $l_p$ penalty. 
If `$a$' is small (large), TL1 behaves like $l_0$ ($l_1$). 
If `$a$' is near 1, TL1 is similar to $l_{1/2}$. 
However, a strikingly different phenomenon is that 
the TL1 thresholding function is in {\it closed form 
for all values of parameter `$a$'}. 
Moreover, we found subcritical and supercritical parameter 
regimes of TL1 thresholding functions with thresholds expressed in 
different formulas. The subcritical TL1 thresholding functions are continuous, 
similar to the soft-thresholding (a.k.a. shrink) function of $l_1$ (Lasso). 
The supercritical TL1 thresholding functions have 
jump discontinuities, similar to $l_{1/2}$ or $l_{2/3}$.  

Several common non-convex penalties in statistics are SCAD \cite{SCAD}, 
MCP \cite{MC+}, log penalty \cite{SparseNet,ReweightedL1}, and 
capped $l_1$ \cite{CL1}. We refer to Mazumder, Friedman and Hastie's paper 
\cite{SparseNet} for an overview. They appeared 
in the univariate regularization problem
\[ \min \limits_x \{ \ \frac{1}{2}(x-y)^2 + \lambda \; P(x)   \ \}, \]
and produced closed form thresholding formulas. TL1 is a 
smooth version of capped $l_1$ \cite{CL1}. SCAD and MCP, corresponding to quadratic 
spline functions with one and two knots, have continuous thresholding functions.  
Log penalty and capped $l_1$ have discontinuous threshold functions. 
The TL1 thresholding function is unique in that it can be either continuous or 
discontinuous depending on parameters `a' and $\lambda$. Also similar to SCAD,
TL1 satisfies unbiasedness, sparsity and continuity conditions, 
which are desirable properties for variable selection \cite{transformed-l1,SCAD}.

The solutions of TL1 regularized minimization problem satisfy a fixed point 
representation involving matrix multiplication and thresholding only. 
Direct fixed point iterative (DFA), semi-adaptive (TL1IT-s1) 
and adaptive iterative schemes (TL1IT-s2) are proposed. 
The semi-adaptive scheme (TL1IT-s1) updates the sparsity regularization 
parameter $\lambda$ based on the sparsity estimate of the solution. 
The adaptive scheme (TL1IT-s2) 
also updates the TL1 parameter `$a$', however only doing the subcritical thresholding. 

We carried out extensive sparse signal recovery experiments in section \ref{section:experiment},  
with three algorithms: TL1IT-s1, Hard and Half-thresholding methods. 
For Gaussian sensing matrices with positive covariance, 
TL1IT-s1 leads the pack and half-thresholding is the second. 
For coherent over-sampled discrete cosine transform (DCT) matrices, 
TL1IT-s1 is again the leader and with considerable margin. 
The half thresholding algorithm drops to the distinct last.
In the presence of measurement noise, the results are similar, with 
TL1IT-s1 maintaining its leader status in both classes of random sensing 
matrices. That TL1IT-s1 fairs much better than other methods 
may be attributed to the two built-in thresholding values. The early iterations 
are observed to go between the subcritical and supercritical regimes frequently. 
Also TL1IT-s1 is stable and robust when exact sparsity of solution is 
replaced by rough estimates as long as the number of linear 
measurements exceeds a certain level.  

The rest of the paper is organized as follows. In section \ref{section: theory}, we 
overview TL1 minimization. 
In section \ref{TL1rep}, we derive TL1 thresholding functions in 
closed form and show their continuity properties with details of the proof left in the appendix. 
The analysis is elementary yet delicate, and makes use of 
the Cardano formula on roots of cubic polynomials and algebraic identities. The 
fixed point representation for the TL1 regularized optimal solution follows.
In section \ref{section:algorithm}, we propose three TL1IT schemes and derive the 
parameter update formulas for TL1IT-s1 and TL1IT-s2 based on the thresholding functions. 
We analyze convergence of the fixed parameter TL1IT algorithm.
In section \ref{section:experiment}, numerical experiments on CS test problems are carried out 
for TL1IT-s1, hard and half thresholding algorithms on Gaussian and over-sampled DCT 
matrices with a broad range of coherence. The TL1IT-s1 leads 
in all cases, and inherits well the robustness and effective sparsity 
promoting capability of TL1 \cite{DCATL1}.
Concluding remarks are in section \ref{section: conclusion}. 

\section{Overview of TL1 Minimization} \label{section: theory}
\setcounter{equation}{0}
The transformed $l_1$ (TL1) function $\rho_a(x)$ is defined as 
\begin{equation}\label{TL1a}
   \rho_a(x)  =  \dfrac{(a+1)|x|}{a+|x|} \ ,  
\end{equation}
where parameter $a \in (0,+\infty)$; see \cite{transformed-l1} for its 
unbiasedness, sparsity and continuity properties. 
With the change of parameter `a',  TL1 interpolates $l_0$ and $l_1$ norms:   
\[ \lim_{a \to 0^{+}} \rho_a(x) = I_{\{x \neq 0\}} , \  \
 \lim_{a \to +\infty} \rho_a(x) = |x|. \]
In Fig$.\;$\ref{gp: level lines}, level lines of TL1 on the plane are shown   
at small and large values of parameter $a$, resembling  
those of $l_1$ (at $a = 100$), $l_{1/2}$ (at $a=1$), and 
$l_0$ (at $a=0.01$). 

\begin{figure}
\def\arraystretch{3}
\begin{tabular}{l r}
\begin{minipage}[t]{0.45\linewidth}
\includegraphics[scale=0.35]{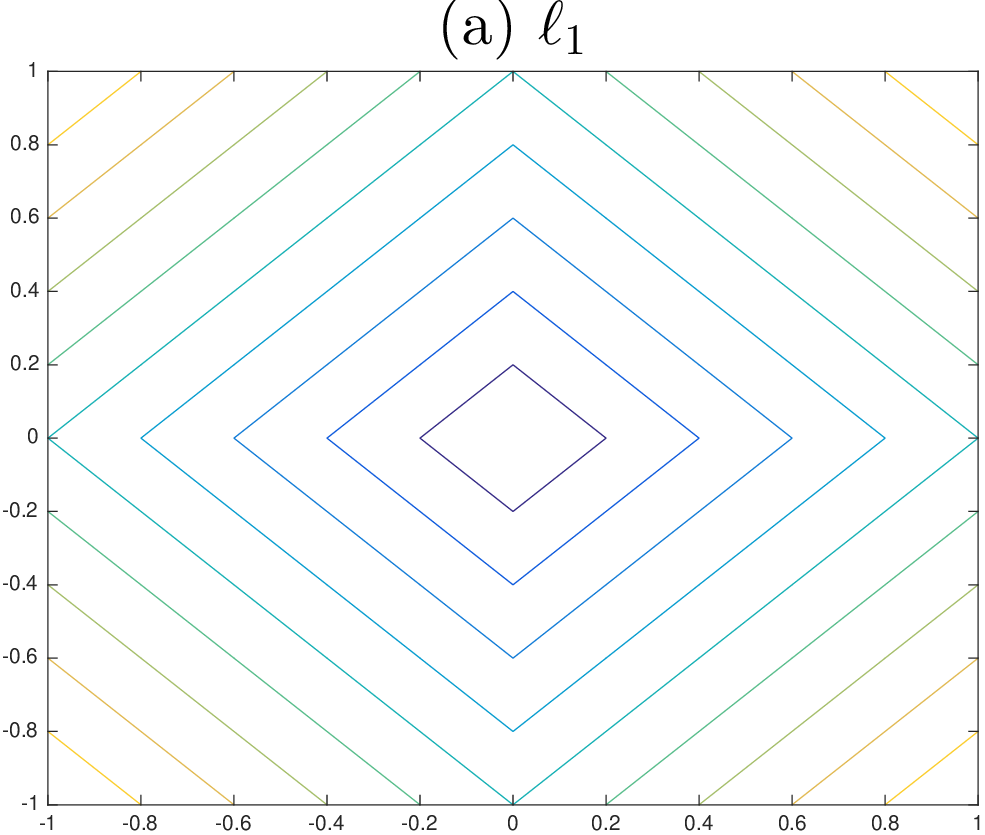} 
\end{minipage} & 
\begin{minipage}[t]{0.45\linewidth}
\includegraphics[scale= 0.35]{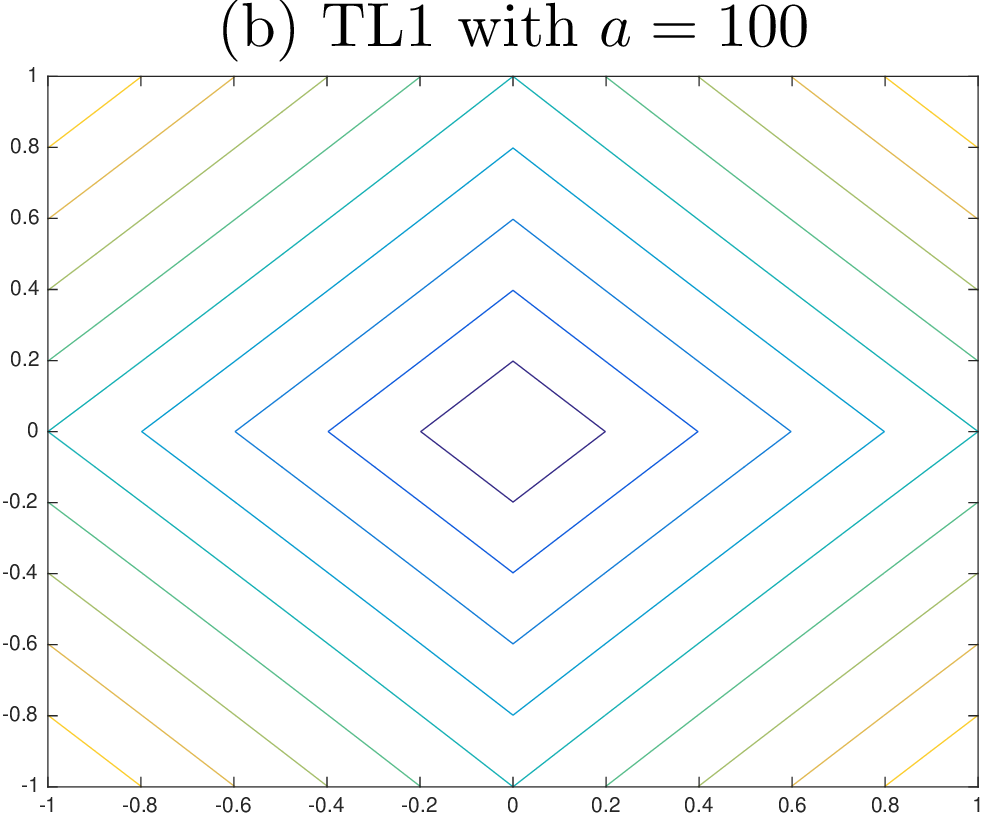} 
\end{minipage} \\
\begin{minipage}[t]{0.45\linewidth}
\includegraphics[scale=0.35]{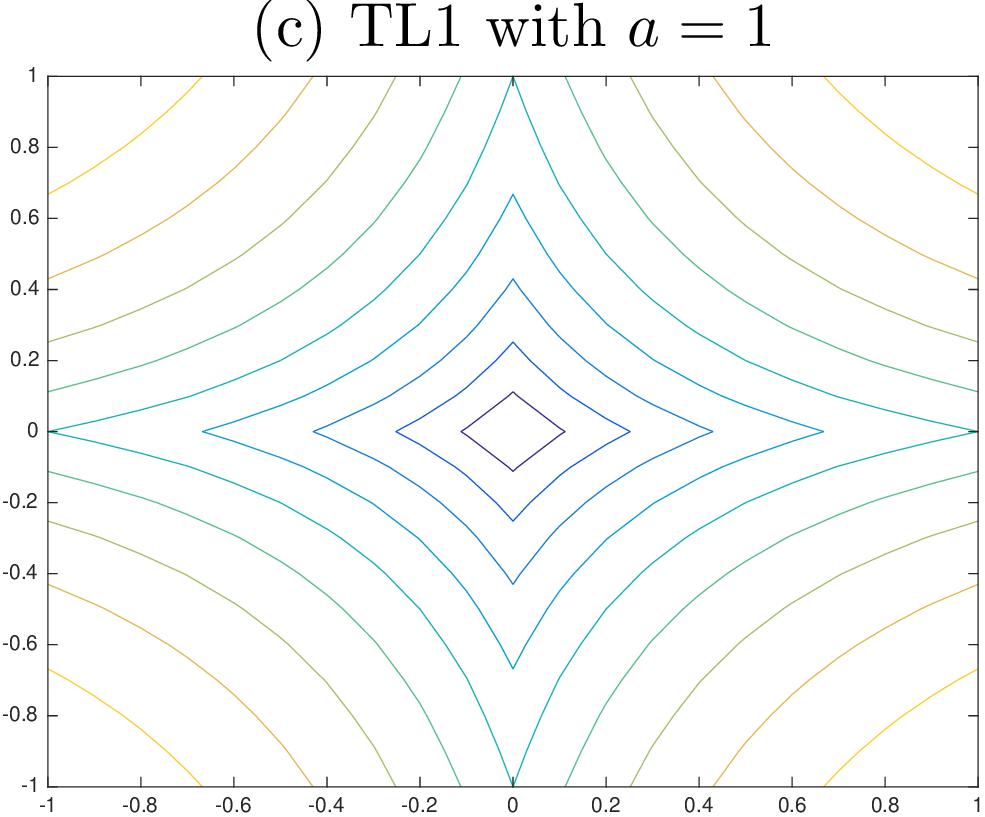} 
\end{minipage} & 
\begin{minipage}[t]{0.45\linewidth}
\includegraphics[scale= 0.35]{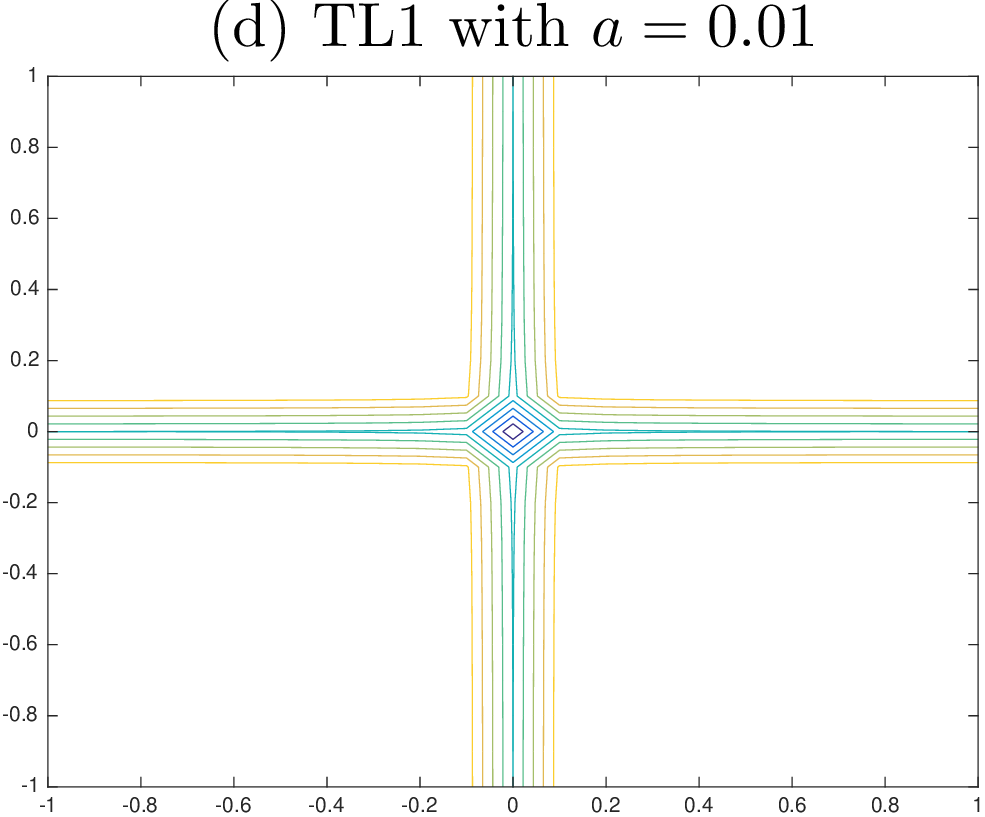} 
\end{minipage}
\end{tabular} 
\caption{Level lines of TL1 with different parameters: $a=100$ (figure b), $a=1$ (figure c), 
$a=0.01$ (figure d). For large parameter a, the graph looks almost the same as $l_1$ (figure a). 
While for small value of a, it tends to the axis.}
\label{gp: level lines}
\end{figure}

Next, we want to expand the definition of TL1 to vector space. For vector 
$x = (x_1, x_2, \cdots , x_N)^T \in \Re^N$, we define
\begin{equation}\label{TL1b}
P_a(x) = \sum \limits_{i = 1}^N \rho_a(x_i).
\end{equation} 

In this paper, we will use TL1 instead of $l_0$ norm to solve application problems 
proposed from compressed sensing. The mathematical models can be generalized 
as two categories: 
the constrained TL1 minimization:   
\begin{equation} \label{model: cons-optim}
   \min \limits_{x \in \Re^N} f(x) =  \min \limits_{x \in \Re^N} P_a(x)  \;\; s.t. \;\;\  Ax=y, 
\end{equation}
and the unconstrained TL1-regularized minimization:
\begin{equation} \label{model: uncons-optim}
   \min \limits_{x \in \Re^N} f(x) = \min \limits_{x \in \Re^N} \frac{1}{2} \|Ax - y \|^2_2 + \lambda P_a(x),
\end{equation}
where $\lambda$ is the trade-off Lagrange multiplier to control the amount of shrinkage.

The exact and stable recovery by TL1 for (\ref{model: cons-optim})
under the Restricted Isometry Property (RIP) \cite{candes2005decoding, candes2006stable}
conditions is established in the companion paper \cite{DCATL1}, where the 
difference of convex functions algorithms (DCA) 
for (\ref{model: cons-optim}) and (\ref{model: uncons-optim}) are also presented and compared with 
some state-of-the-art CS algorithms on sparse signal recovery problems.
In paper \cite{DCATL1},  the authors find that TL1 is always among top performers 
in RIP and non-RIP categories alike. 
However, matrix multiplication and inverse operations are 
involved at each iteration step of TL1 DC algorithms, which increases run time and 
computation costs. Iterative thresholding (IT) algorithms usually are much faster, 
since only matrix-vector multiplications and elementwise scalar thresholding operations are needed. 
Also, due to precise threshold values, it needs fewer steps in IT to 
converge to sparse solutions.    
In order to reduce computation time, we shall explore thresholding property for TL1 penalty. 
In another paper \cite{TS1}, we expand 
TL1 thresholding and representation theories to low rank matrix completion 
problems via Schatten-1 quasi-norm. 
 
\section{Thresholding Representation and Closed-Form Solutions}\label{TL1rep}
\setcounter{equation}{0}
The thresholding theories and algorithms for $l_0$ quasi-norm (hard-thresholding) 
\cite{hard-sparsify-blumensath2012accelerated,hard-threshold-blumensath2008iterative} 
and $l_1$ norm (soft-thresholding) \cite{soft-threshold-lp-daubechies2004iterative,Don_95} 
are well-known and widely tested. Recently, the closed form thresholding representation theories 
and algorithms for $l_{p}$ ($p=1/2,2/3$) regularized problems are proposed 
\cite{xian-half-cao2013fast-image,xian-half} based on Cardano's root formula of cubic polynomials. 
However, these algorithms are limited to few specific values of parameter $p$. 
Here for TL1 regularization problem, we derive the closed form representation of optimal solution, under
{\it any positive value of parameter} $a$. 

Let us consider the unconstrained TL1 regularization model (\ref{model: uncons-optim}): 
\begin{equation*}
\min \limits_{x} \frac{1}{2} \| Ax -y \|_2^2 + \lambda P_a(x),
\end{equation*}
for which the first order optimality condition is: 
\begin{equation}\label{equ: diff of P}
0 = A^T(Ax-y) + \lambda \cdot \nabla P_a(x).
\end{equation}
Here $\nabla P_a(x) = \left( \, \partial \rho_a(x_1), 
\, ...\, , \partial \rho_a(x_N) \, \right)$, and 
$\partial \rho_a(x_i) = \dfrac{a(a+1) SGN(x_i)}{(a+|x_i|)^2}$. 
$SGN(\cdot)$ is the set-valued signum function with $SGN(0) \in [-1,1]$, instead of a single fixed value. In this paper, we will use $sgn(\cdot)$ to represent the standard signum function with $sgn(0)=0$.
From equation (\ref{equ: diff of P}), it is easy to get 
\begin{equation} \label{equ: optimal condition}
x + \mu A^T(y-Ax) = x + \lambda \mu \nabla P_a(x).
\end{equation}

We can rewrite the above equation, via introducing two operators  
\begin{equation} \label{func: R&B}
\begin{array}{l}
\vspace{1mm}
R_{\lambda \mu, a}(x) = [I + \lambda \mu \nabla P_a(\cdot)]^{-1}(x), \\
B_{\mu}(x) = x + \mu A^T(y- Ax).
\end{array}
\end{equation}
From equation (\ref{equ: optimal condition}), we will get a representation
equation for optimal solution $x$: 
\begin{equation} \label{equ: repre nessa mini}
x = R_{\lambda \mu, a}(B_{\mu}(x)).
\end{equation}
We will prove that the operator $R_{\lambda \mu, a}$ is diagonal under some requirements for parameters $\lambda$, $\mu$ and $a$. Before that, 
a closed form expression of proximal operator at scalar TL1 $\rho_a(\cdot)$ 
will be given and proved at following subsection.  This optimal solution 
expression will be used to prove the threshold representation theorem for 
model (\ref{model: uncons-optim}).


\subsection{Proximal Point Operator for TL1}
Like \cite{Proximal}, we introduce proximal operator 
$prox_{\lambda \rho_a}: \Re \rightarrow \Re$ 
for univariate TL1 ($\rho_a$) regularization problem, 
\[
prox_{\lambda \rho_a}(y) = arg \min \limits_{x \in \Re} \left( \frac{1}{2}(y-x)^2 + \lambda \rho_a(y) \right).
\]
Proximal operator of a convex function usually intends to solve a small convex 
regularization problem, which often admits closed-form formula or an efficient
specialized numerical methods. However, for non-convex functions, like $l_p$
with $p \in (0.1)$, their related proximal operators do not have closed form 
solutions in general. There are many iterative algorithms to approximate 
optimal solution. But they need more computing time and sometimes only converge to local optimal or stationary point. 
In this subsection, we prove that for TL1 function, there indeed exists a 
closed-formed formula for its optimal solution.

For the convenience of our following theorems, we want to introduce three parameters: 
\begin{equation} \label{form: threshold parameters}
\left\lbrace \begin{array}{l}
   \vspace{2mm}
   t^*_1 = \dfrac{3}{2^{2/3}} (\lambda a(a+1))^{1/3} -a \\
   t^*_2 = \lambda \frac{a+1}{a} 
   \vspace{2 mm} \\
   t^*_3 = \sqrt{2\lambda (a+1)} - \frac{a}{2}. \\
\end{array} \right.
\end{equation}
It can be checked that inequality $t^*_1 \leq t^*_3 \leq t^*_2$ holds. 
The equality is realized if  
$\lambda = \frac{a^2}{2(a+1)}$ (Appendix A).

\begin{lem} \label{lem: roots for poly}
For different values of scalar variable $x$, the roots of the following two 
cubic polynomials in $y$ satisfy properties:
\begin{enumerate}
\item If $x > t^*_1$,
there are 3 distinct real roots of the cubic polynomial:
\begin{equation*}
y(a+y)^2 - x(a+y)^2 + \lambda a(a+1) = 0.
\end{equation*}
Furthermore, the largest root $y_0$ is given by $y_0 = g_{\lambda}(x)$, where 
\begin{equation}  \label{func: g formula}
g_{\lambda}(x) = sgn(x) \left\{ \frac{2}{3}(a+|x|)cos(\frac{\varphi(x)}{3}) 
                                 -\frac{2a}{3} + \frac{|x|}{3} \right\}
\end{equation}
with $ \varphi(x) = \arccos( 1 - \frac{27\lambda a(a+1)}{2(a+|x|)^3} ) $, 
and $|g_{\lambda}(x)| \leq |x|$.

\item If $x < -t^*_1$, 
there are also 3 distinct real roots of cubic polynomial:
\begin{equation*}
y(a-y)^2 - x(a-y)^2 - \lambda a(a+1) = 0.
\end{equation*}
Furthermore, the smallest root denoted by $y_0$, is given by $y_0 = g_{\lambda}(x)$.
\end{enumerate}
\end{lem}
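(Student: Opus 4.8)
The plan is to prove Lemma \ref{lem: roots for poly} by a direct application of the Cardano formula to the depressed cubic obtained after a suitable shift of the variable $y$. I treat only Part 1 in detail; Part 2 will follow by the substitution $x \mapsto -x$, $y \mapsto -y$, which carries the first polynomial into the second, so that the smallest root of the second corresponds to the negative of the largest root of the first, and the formula $g_\lambda$ respects this because $g_\lambda$ is odd in $x$ (it is built from $sgn(x)$ and $|x|$).

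First I would expand the cubic $y(a+y)^2 - x(a+y)^2 + \lambda a(a+1)=0$ as a monic polynomial $y^3 + p_2 y^2 + p_1 y + p_0 = 0$ in $y$ with coefficients depending on $a$, $x$, $\lambda$, and then remove the quadratic term by the shift $y = z + \frac{2a+|x|}{3}$ (here using $x>t_1^*>0$, so $x=|x|$), reducing it to the depressed form $z^3 + pz + q = 0$. A routine computation gives $p = -\frac{(a+|x|)^2}{3}$ and $q = \frac{2(a+|x|)^3}{27}\bigl(\frac{27\lambda a(a+1)}{2(a+|x|)^3}-1\bigr) \cdot(-1)$ up to signs that I will pin down carefully; the key quantity is the discriminant $\Delta = -4p^3 - 27 q^2$, which one shows equals $\frac{(a+|x|)^6}{\text{const}}\bigl(1-(1-\tfrac{27\lambda a(a+1)}{2(a+|x|)^3})^2\bigr)$ times a positive factor. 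The condition $x > t_1^*$ is exactly equivalent, after cubing both sides of $t_1^* = \frac{3}{2^{2/3}}(\lambda a(a+1))^{1/3}-a$, to the strict inequality $\frac{27\lambda a(a+1)}{2(a+|x|)^3} < 2$, which together with positivity of $\lambda$ gives $\Delta>0$; hence three distinct real roots, and the argument of $\arccos$, namely $1-\frac{27\lambda a(a+1)}{2(a+|x|)^3}$, lies in $(-1,1)$ so $\varphi(x)$ is well defined.

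With the trigonometric (casus irreducibilis) form of Cardano's formula, the three real roots are $z_k = 2\sqrt{-p/3}\,\cos\!\bigl(\frac{\varphi(x)}{3} - \frac{2\pi k}{3}\bigr)$ for $k=0,1,2$, where $\cos\varphi(x) = \frac{3q}{2p}\sqrt{-3/p}$; I would verify that this last expression equals $1 - \frac{27\lambda a(a+1)}{2(a+|x|)^3}$, matching the definition of $\varphi(x)$ in the statement. Since $\sqrt{-p/3} = \frac{a+|x|}{3}$, the $k=0$ root $z_0 = \frac{2(a+|x|)}{3}\cos\frac{\varphi(x)}{3}$ is the largest of the three (because $\varphi(x)/3 \in (0,\pi/3)$ makes $\cos\frac{\varphi(x)}{3}$ the largest among the three cosine values), and undoing the shift gives $y_0 = z_0 + \frac{2a+|x|}{3} = \frac{2}{3}(a+|x|)\cos\frac{\varphi(x)}{3} - \frac{2a}{3} + \frac{|x|}{3}$, which is precisely $g_\lambda(x)$ for $x>0$; the $sgn(x)$ prefactor is harmless here and is what makes the formula correct on the $x<-t_1^*$ branch.

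The last item is the bound $|g_\lambda(x)| \le |x|$. For $x>t_1^*>0$ this amounts to showing $0 \le y_0 \le x$, equivalently that the polynomial $h(y) := y(a+y)^2 - x(a+y)^2 + \lambda a(a+1)$ satisfies $h(x) \ge 0$ and $h(0)\ge 0$ with the largest root trapped in $[0,x]$; indeed $h(x) = \lambda a(a+1) > 0$ and $h(0) = \lambda a(a+1) - x a^2$, and one checks the sign structure of $h$ on $(-\infty,\infty)$ using that its local extrema occur at $y$ where $3y^2 + (4a - 2x)y + a^2 - 2ax = 0$, to conclude the largest root does not exceed $x$. I expect the main obstacle to be precisely this monotonicity/trapping argument together with the bookkeeping that converts the threshold inequality $x > t_1^*$ into the clean discriminant condition $\frac{27\lambda a(a+1)}{2(a+|x|)^3}<2$; the Cardano computation itself is mechanical, but getting every sign right in the depressed cubic and in the identification of $\cos\varphi(x)$ requires care, and this is the kind of "elementary yet delicate" step the introduction warns about, so I would relegate the full algebra to Appendix A as the authors indicate.
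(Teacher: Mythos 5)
Your overall route is the same as the paper's: expand and depress the cubic, show the discriminant is positive because $x>t^*_1$ is equivalent to $\frac{27\lambda a(a+1)}{2(a+|x|)^3}<2$ (so the $\arccos$ argument lies in $(-1,1)$), write the three roots in trigonometric Cardano form, identify the $k=0$ root as the largest, and obtain Part 2 from the symmetry $x\mapsto -x$, $y\mapsto -y$ together with the oddness of $g_{\lambda}$; this matches the paper's two-step substitution $\eta=y+a$, $\eta=t+\frac{a+x}{3}$. One bookkeeping error to fix: the expanded cubic is $y^3+(2a-x)y^2+(a^2-2ax)y+\lambda a(a+1)-a^2x$, so the depressing shift is $y=z+\frac{x-2a}{3}$, not $y=z+\frac{2a+|x|}{3}$; with your shift the $z^2$ coefficient is $4a\neq 0$, and it is inconsistent with the final identity $y_0=z_0-\frac{2a}{3}+\frac{|x|}{3}$ that you (correctly) state. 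This is recoverable, but it is precisely the sign bookkeeping you promised to pin down.

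The genuine gap is your argument for $|g_{\lambda}(x)|\le |x|$. Trapping the largest root in $[0,x]$ from $h(x)\ge 0$ and $h(0)\ge 0$ cannot work: two nonnegative endpoint values do not force a root in between, and in fact the largest root need not be nonnegative on all of $x>t^*_1$. Since $h(0)=\lambda a(a+1)-a^2x$, one has $h(0)>0$ exactly when $x<t^*_2$, and in that window the largest root can be negative and even exceed $x$ in magnitude: for $a=1$, $\lambda=0.1$ one has $t^*_1\approx 0.105$, and at $x=0.11$ the largest root is $y_0=g_{\lambda}(x)\approx -0.21$, so $|g_{\lambda}(x)|>|x|$ there. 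What does follow immediately from $h(y)\ge \lambda a(a+1)>0$ for all $y\ge x$ is the one-sided bound $g_{\lambda}(x)\le x$, which is all the paper itself verifies ($y_0\le x$, $y_2<0$) and all that is used later, since the thresholding theorem invokes the formula only for $|x|$ above $t^*_2$ or $t^*_3$; in that regime $h(0)<0<h(x)$ and the trapping $y_0\in(0,x)$ is valid. So you should either prove only $g_{\lambda}(x)\le x$, or establish the two-sided bound only for $x$ above the relevant threshold; as written, the step would fail on part of the stated range.
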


\begin{proof}
\begin{enumerate}
\item[1.)] 
First, we consider the roots of cubic equation: 
\[
	y(a+y)^2 - x(a+y)^2 + \lambda a(a+1) = 0 \text{, when $x>t^*_1$.}
\]

We apply variable substitution 
$\eta = y + a$ 
in the above equation, then it becomes
\[ 
	\eta^3 - (a+x)\eta^2 + \lambda a (a+1) = 0,
\]
whose discriminant is: 
\[ 
	\bigtriangleup = \lambda\, (a+1)\, a \, [4(a+x)^3 - 27\lambda\, (a+1)\, a ].
\]
Since $ x \geq t^* $ and $\bigtriangleup > 0$, 
there are three distinct real roots for this cubic equation. 

Next, we change variables as $\eta = t + \frac{a}{3} + \frac{x}{3} = y + a$. 
The relation between $y$ and $t$ is: $ y = t - \frac{2a}{3} + \frac{x}{3} $. 
In terms of $t$, the cubic polynomial is turned into a depressed cubic as: 
\[ t^3 + pt + q = 0, \] 
where $p = -(a+x)^2/3$, and $q = \lambda a(a+1) - 2(a+x)^3/27$.
The three roots in trigonometric form are: 
\begin{equation} \label{equ: three roots}
\begin{array}{l}
t_0 = \frac{2(a+x)}{3} \, \cos(\varphi /3) \\
t_1 = \frac{2}{3} (a+x) \, \cos(\varphi/3 + \pi/3) \\
t_2 = -\frac{2}{3}(a+x)\, \cos(\pi/3 - \varphi/3) \\
\end{array}
\end{equation}
where $\varphi = \arccos(1 - \frac{27\lambda a(a+1)}{2(a+x)^3}) $.

Then $t_2 < 0$, and 
$t_0 > t_1 > t_2.$ 
By the relation 
$y = t - \frac{2a}{3} + \frac{x}{3}$, 
the three roots in variable $y$ are: 
$y_i = t_i - \frac{2a}{3} + \frac{x}{3}$, for $i = 1,2,3$. 
From these formula, we know that: 
\[
	y_0 > y_1 > y_2.
\] 
Also it is easy to check that 
$y_0 \leq x$ and $y_2 < 0$, and the largest root $y_0 = g_{\lambda}(x)$, 
when $x>t^*_1$.  

\item[2.)] 
Next, we discuss the roots of the cubic equation: 
\[ 
	(a-y)^2y - x(a-y)^2 - \lambda a(a+1) = 0 \text{, when $x<-t^*_1$}.
\]
Here we set: $\eta = a - y$, and $t = \eta + \frac{x}{3} - \frac{a}{3}$. 
So $y = -t + \frac{x}{3} + \frac{2a}{3}$. 
By a similar analysis as in part (1), there are 3 distinct roots for polynomial 
equation: $y_0 < y_1 < y_2$ with the smallest solution
\[
	y_0 = -\frac{2}{3}(a-x)\, \cos(\varphi/3) + \frac{x}{3} + \frac{2a}{3},
\] 
where $\varphi = \arccos(1 - \frac{27\lambda a(a+1)}{2(a-x)^3})$.
So we proved that the smallest solution is 
$y_0 = g_{\lambda}(x)$, when $x < -t^*_1$.
\end{enumerate}
\end{proof}

Next let us define the function $f_{\lambda,x}(\cdot): \Re \rightarrow \Re$,
\begin{equation}
   f_{\lambda,x}(y) = \frac{1}{2}(y-x)^2 + \lambda \rho_a(y).
\end{equation}
So $\partial f_{\lambda,x}(y) = y - x + \lambda \frac{a(a+1)SGN(y)}{(a+|y|)^2}$. \\

\begin{thm} \label{theorem: g formula}
The optimal solution $y_\lambda^*(x) = arg \min \limits_{y} f_{\lambda,x}(y)$ 
is a threshold function with threshold value $t$ :
\begin{equation} \label{TL1thrfunc} 
y_\lambda^*(x) = 
\left\{
	\begin{array}{ll}
		0 ,             & |x| \leq t \\
		g_{\lambda}(x), & |x| > t
	\end{array} 
\right.
\end{equation}
where $g_{\lambda}(\cdot)$ is defined in (\ref{func: g formula}). 
The threshold parameter $t$ depends on regularization parameter $\lambda$,  
\begin{enumerate}
\item  if $\lambda \leq \frac{a^2}{2(a+1)}$ (sub-critical),
$$t = t^*_2 = \lambda \frac{a+1}{a};$$
\item $\lambda > \frac{a^2}{2(a+1)}$ (super-critical),
$$t = t^*_3 =  \sqrt{2\lambda (a+1)} - \frac{a}{2}, $$
\end{enumerate}
where parameters $t_2^*$ and $t_3^*$ are defined 
 in formula (\ref{form: threshold parameters}).
\end{thm}

\begin{proof}
In the following proof, we represent $y_\lambda^*(x)$ as $y^*$  for simplicity. 
We split the value of $x$ into 3 cases: $x=0$, $x>0$ and $x<0$, then prove 
our conclusion case by case. 
\begin{enumerate}
\item[1.)] $x=0$. 

In this case, optimization objective function is 
$f_{\lambda,x}(y) = \frac{1}{2} y^2 + \lambda \rho_a(y)$. 
Here the two factors $\frac{1}{2} y^2$ and $\lambda \rho_a(|y|)$ 
are both increasing for $y>0$, and decreasing for $y<0$. 
Thus $f(0)$ is the unique minimizer for function $f_{\lambda,x}(y)$. 
So 
\[ 
	y^* = 0  \text{, when } x = 0.
\]

\item[2.)] $x > 0$.

Since $\frac{1}{2} (y-x)^2$ and $\lambda \rho_a(y)$ 
are both decreasing for $y<0$, 
our optimal solution will only be obtained  at nonnegative values.
Thus it just needs to consider all positive stationary points for function 
$f_{\lambda}(y)$ and also point $0$. 

When $y > 0$, we have:
\[ 
	f^{'}_{\lambda,x} (y) = y - x + \lambda \dfrac{a(a+1)}{(a+y)^2}, 
\]
and 
\[ 
	f^{''}_{\lambda,x} (y) = 1 - 2 \lambda \dfrac{a(a+1)}{(a+y)^3}.
\]
	
Since $f_{\lambda,x}^{''}(y)$ is increasing,  
$f_{\lambda,x}^{''}(0) = 2 \lambda \frac{(a+1)}{a^2}$ 
determines the convexity for the function $f(y)$. In the following proof, we
further discuss the value of $y^*$ by two conditions: 
$\lambda \leq \frac{a^2}{2(a+1)}$ and $\lambda > \frac{a^2}{2(a+1)}$.

\begin{enumerate}
\item[2.1)] $\lambda \leq \frac{a^2}{2(a+1)}$.

So we have 
$\inf \limits_{y > 0} f^{''}_{\lambda}(y) = f^{''}_{\lambda}(0+) 
	= 1 - 2 \lambda \frac{(a+1)}{a^2} \geq 0,$ 
which means function $f^{'}_{\lambda}(y)$ is increasing for $y \geq 0$, 
with minimum value 
$f^{'}_{\lambda}(0) = \lambda \frac{(a+1)}{a} - x = t^*_2 - x$.

$i) $ When $0 \leq x \leq t_2^*$, 
$f^{'}_{\lambda,x}(y)$ is always positive, 
thus the optimal value $y^* = 0.$ 

$ii) $ When $x > t_2^*$,  
$f^{'}_{\lambda,x}(y)$ is first negative then positive. 
Also $x \geq t_2^* \geq t_1^*$. 
The unique positive stationary point $y^*$ of $f_{\lambda,x}(y)$ 
satisfies equation: 
$f^{'}_{\lambda}(y^*) = 0,$ which implies  
\begin{equation} \label{equ: cubic in theorem 1}
	y(a+y)^2 - x(a+y)^2 + \lambda a(a+1) = 0.
\end{equation} 
According to Lemma \ref{lem: roots for poly}, 
the optimal value $y^* = y_0 =  g_{\lambda}(x)$. 

Above all, the value for $y^*$ is : 
\begin{equation}
y^* = \left\{ 
\begin{array}{ll}
0 , & 0 \leq x \leq t_2^*; \\
g_{\lambda}(x), & x > t_2^*
\end{array} \right.
\end{equation}
under the condition $\lambda \leq \frac{a^2}{2(a+1)}$.

\item[2.2)] $\lambda > \frac{a^2}{2(a+1)}$.

In this case, due to the sign of $f^{''}_{\lambda}(y)$, we know that function
 $f^{'}_{\lambda,x}(y)$ is decreasing at first then switches to be increasing at 
 the domain $[0,\infty)$. Its minimum obtained at point 
$\overline{y}  = (2\lambda a (a+1))^{1/3} - a$ 
and 
\[
f^{'}_{\lambda}(\overline{y}) 
	= \frac{3}{2^{2/3}}(\lambda (a+1)a)^{1/3} -a - x  
	= t_1 - x.
\] 
Thus $ f^{'}_{\lambda}(y) \geq t_1^* - x$, for $y \geq 0$. 

$i) $ When $0 \leq x \leq t_1^*$, 
function $f_{\lambda}(y)$ is always increasing. Thus optimal value $y^* = 0$.

$ii) $ When $ t_2^* \leq x$, $f^{'}_{\lambda}(0+) \leq 0$. 
So function $f_{\lambda}(y)$ is decreasing first, then increasing. 
There is only one positive stationary point, which is also the optimal solution. 
Using Lemma \ref{lem: roots for poly},
we know that $y^* = g_{\lambda}(x)$.

$iii) $ When $ t_1^* < x < t_2^*$, $f^{'}_{\lambda}(0+) > 0$. 
Thus function $f_{\lambda}(y)$ is first increasing, then 
decreasing and finally increasing, which implies that 
there are two positive stationary points and the larger one is a local minima. 
Using Lemma \ref{lem: roots for poly} again, 
the local minimize point will be $y_0 = g_{\lambda}(x)$, the largest root 
of equation (\ref{equ: cubic in theorem 1}). But we still need to compare 
$f_{\lambda}(0)$ and $f_{\lambda}(y_0)$ to distinguish the global optimal $y^*$.
Since $ y_0 - x + \lambda \frac{a(a+1)}{(a+y_0)^2} = 0$, which implies 
$\lambda \frac{(a+1)}{a+y_0} = \frac{(x-y_0)(a+y_0)}{a}$,
we have  
\begin{equation}
\begin{array}{ll}
f_{\lambda}(y_0) - f_{\lambda}(0) 
              & = \frac{1}{2}y_0^2 - y_0 x + \lambda  \frac{(a+1)y_0}{a+y_0} \\
              & = y_0 ( \frac{1}{2}y_0 -  x + \lambda  \frac{(a+1)}{a+y_0})  \\
              & = y_0 ( \frac{1}{2}y_0 -  x + \frac{(x-y_0)(a+y_0)}{a}) \\
              & = y_0^2 ( \frac{x-y_0}{a} - \frac{1}{2} ) 
= y_0^2 ( (x- g_{\lambda} (x))/a - 1/2 )  
\end{array}
\end{equation}
It can be proved that parameter $t^*_3$ is the unique root of 
$t - g_{\lambda}(t) - \frac{a}{2} = 0$ in $[t^*_1, t^*_2]$
(see Appendix B). 
For $ t^*_1 \leq t \leq t^*_3$, $t - g_{\lambda}(t) - \frac{a}{2} \geq 0$;
for $ t^*_3 \leq t \leq t^*_2$, $t - g_{\lambda}(t) - \frac{a}{2} \leq 0$.
So in the third case: $ t_1^* < x < t_2^*$: 
if $t_1^* < x \leq t^*_3$, $y^* = 0$; 
if $x > t^*_3$,   $y^* = y_0= g_{\lambda}(x).$

Finally we know that under the condition $\lambda > \frac{a^2}{2(a+1)} :$ 
\begin{equation}
y^* = \left\{ 
\begin{array}{ll}
0 ,    & 0 \leq x \leq t^*_3; \\
g_{\lambda}(x),   & x > t^*_3,
\end{array} \right.
\end{equation}

\end{enumerate}

\item[3.)] $x < 0$.

Notice that 
\[
	\inf \limits_{y} \, f_{\lambda,x}(y) 
		= \inf \limits_{y} \, f_{\lambda,x}(-y) 
		= \inf \limits_{y} \  \, \frac{1}{2}(y - |x|))^2 + \rho_a (y) ,
\] 
so $y^*(x) = -y^*(-x)$, which implies that 
the formula obtained when $x > 0$ above, can extend 
to the case: $x < 0$ by odd symmetry. Formula (\ref{TL1thrfunc}) holds.

%
\end{enumerate}

Summarizing results from all cases, the proof is complete. 

\end{proof}

%
%
%

\subsection{Optimal Point Representation for Regularized TL1 
(\ref{model: uncons-optim})}

Next, we will show that the optimal solution of the TL1 regularized
problem (\ref{model: uncons-optim})  can be expressed by a thresholding function. 
Let us introduce two auxiliary objective functions. 
For any given positive parameters $\lambda$, $\mu$ and 
vector $z \in \Re^N$, define:  
\begin{equation} \label{func surrogate}
\begin{array}{l}
C_{\lambda}(x) = \frac{1}{2}\|y - Ax\|_2^2 + \lambda P_a(x)  \\
C_{\mu}(x,z) = \mu \left\{ C_{\lambda}(x) - \frac{1}{2}\|Ax-Az\|_2^2 \right\} 
               + \frac{1}{2}\|x-z\|_{2}^{2}.
\end{array}
\end{equation}
The first function $C_{\lambda}(x)$ comes from the objective of TL1 regularization problem (\ref{model: uncons-optim}).

Starting from this subsection till the end of this paper, we substitute parameter 
$\lambda$ in threshold value $t_i^*$ with the product of $\lambda$ and $\mu$,
which are
\begin{equation}
\left\lbrace \begin{array}{l}
   \vspace{2mm}
   t^*_1 = \dfrac{3}{2^{2/3}} (\lambda \mu a(a+1))^{1/3} -a \\
   t^*_2 = \lambda \mu \frac{a+1}{a} 
   \vspace{2 mm} \\
   t^*_3 = \sqrt{2\lambda \mu (a+1)} - \frac{a}{2}. \\
\end{array} \right.
\end{equation}

\begin{lem}  \label{lem 1}
If $x^s = (x^s_1, \cdots ,x^s_N)^T$ is a minimizer of $C_{\mu}(x,z)$ 
with fixed parameters $\{ \mu, a, \lambda, z\}$,
then there exists a positive number 
$t = t^*_2 \, I_{\left\{ \lambda \mu \leq \frac{a^2}{2(a+1)} \right\}} 
		+ t^*_3\, I_{\left\{ \lambda \mu > \frac{a^2}{2(a+1)} \right\}} $, 
such that: for $i = 1, \cdots , N$, 
\begin{equation}
\begin{array}{ll}
x_i^s =0,                                 
	& \text{ when } abs([B_{\mu}(z)]_i) \leq t; \\
x_i^s = g_{\lambda \mu }([B_{\mu}(z)]_i), 
	& \text{ when } abs([B_{\mu}(z)]_i) > t.
\end{array}
\end{equation}
Here the function $g_{\lambda \mu }(\cdot)$ is same as (\ref{func: g formula}) 
with parameter $\lambda \mu$ in place of $\lambda$ there.
$B_{\mu}(z) = z + \mu A^T(y- Az) \in \Re^N$, as in (\ref{func: R&B}).
\end{lem}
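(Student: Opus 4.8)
The plan is to reduce the minimization of $C_\mu(x,z)$ to $N$ decoupled scalar problems, each of which is exactly the one solved by Theorem \ref{theorem: g formula}. First I would expand $C_\mu(x,z)$ and collect terms in $x$. Writing $C_\lambda(x) = \frac12\|y-Ax\|_2^2 + \lambda P_a(x)$ and substituting into the definition, the quadratic-in-$x$ terms are $\mu\big(\frac12\|y-Ax\|_2^2 - \frac12\|Ax-Az\|_2^2\big) + \frac12\|x-z\|_2^2$; the key algebraic fact is that the $\|Ax\|_2^2$ terms cancel between the two halves, so the dependence on $x$ through $A$ collapses to a linear term. After completing the square one finds
\begin{equation*}
C_\mu(x,z) = \frac12\big\|x - B_\mu(z)\big\|_2^2 + \lambda\mu\, P_a(x) + (\text{terms independent of }x),
\end{equation*}
where $B_\mu(z) = z + \mu A^T(y-Az)$ as in (\ref{func: R&B}). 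I would carry this identity out carefully since it is the one genuinely computational step, but it is the same surrogate/majorization trick used for soft- and half-thresholding, so no surprises are expected.

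Once the identity is established, the minimization separates: since $P_a(x) = \sum_{i=1}^N \rho_a(|x_i|)$ and $\|x - B_\mu(z)\|_2^2 = \sum_{i=1}^N (x_i - [B_\mu(z)]_i)^2$, we have
\begin{equation*}
C_\mu(x,z) = \sum_{i=1}^N \Big( \tfrac12 (x_i - [B_\mu(z)]_i)^2 + \lambda\mu\, \rho_a(|x_i|) \Big) + \text{const},
\end{equation*}
so a vector $x^s$ minimizes $C_\mu(\cdot,z)$ if and only if each coordinate $x_i^s$ minimizes the scalar function $f_{\lambda\mu}(y) = \frac12(y - [B_\mu(z)]_i)^2 + \lambda\mu\,\rho_a(|y|)$ — precisely the object in Theorem \ref{theorem: g formula} with the role of $\lambda$ played by $\lambda\mu$ and the role of $x$ played by $[B_\mu(z)]_i$. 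Applying that theorem coordinatewise gives $x_i^s = 0$ when $|[B_\mu(z)]_i| \le t$ and $x_i^s = g_{\lambda\mu}([B_\mu(z)]_i)$ when $|[B_\mu(z)]_i| > t$, with the threshold $t$ equal to $t_2^*$ if $\lambda\mu \le \frac{a^2}{2(a+1)}$ and $t_3^*$ otherwise, which is exactly the claimed statement.

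The main obstacle, such as it is, is bookkeeping in the first step: one must verify that the cross terms reorganize so that $x$ couples to the data only through $B_\mu(z)$, and one should note where positivity of $\mu$ is used (to keep $\lambda\mu > 0$ so that Theorem \ref{theorem: g formula} applies, and implicitly so the surrogate is well-posed). A secondary point worth a sentence is the boundary case $|[B_\mu(z)]_i| = t$ in the supercritical regime, where $f_{\lambda\mu}$ attains its minimum at both $0$ and the nonzero stationary point; the statement is phrased with $\le t$ on the zero branch, consistent with the convention already adopted in Theorem \ref{theorem: g formula}, so I would simply inherit that tie-breaking rule rather than re-derive it. Everything else is a direct invocation of the already-proved scalar theorem.
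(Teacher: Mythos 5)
Your proposal is correct and follows essentially the same route as the paper: rewrite $C_{\mu}(x,z)$ by completing the square so that the $\|Ax\|_2^2$ terms cancel and $x$ couples to the data only through $B_{\mu}(z)$, observe that the problem then decouples coordinatewise into $\frac{1}{2}(x_i - [B_{\mu}(z)]_i)^2 + \lambda\mu\,\rho_a(|x_i|)$, and invoke Theorem \ref{theorem: g formula} with $\lambda\mu$ in place of $\lambda$. Your explicit remarks on the tie-breaking at $|[B_{\mu}(z)]_i| = t$ and on where positivity of $\mu$ enters are points the paper leaves implicit, but they do not change the argument.
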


\begin{proof} The second auxiliary objective function
can be rewritten as
\begin{equation} \label{equ: C x n+1 n relation}
\begin{array}{rll}
C_{\mu}(x,z) & = & \frac{1}{2} \|x-[(I-\mu A^TA)z + \mu A^Ty]\|_2^2 + \lambda \mu P_a(x)  \\ 
             \vspace{2mm}
             &   & + \frac{1}{2}\mu\|y\|_2^2 + \frac{1}{2}\|z\|_2^2 - \frac{1}{2}\mu \|Az\|^2_2 
                   - \frac{1}{2} \|(I - \mu A^TA)z+\mu A^Ty\|_2^2 \\
                   
             & = & \frac{1}{2} \sum \limits_{i = 1}^{N} (x_i-[B_{\mu}(z)]_i)^2 
                   + \lambda \mu \sum \limits_{i = 1}^{N} \rho_a(x_i)  \\
             &   & + \frac{1}{2}\mu\|y\|_2^2 + \frac{1}{2}\|z\|_2^2 - \frac{1}{2}\mu \|Az\|^2_2  
                   - \frac{1}{2} \|(I - \mu A^TA)z+\mu A^Ty\|_2^2 ,
\end{array}
\end{equation}
which implies that
\begin{equation} \label{equa: optimal proximal}
\begin{array}{lll}
\vspace{2mm}
x^s 
	& = & arg \min \limits_{x \in \Re^N}  C_{\mu}(x,z) \\
   	& = & arg \min \limits_{x \in \Re^N} \left\{ 
    		\frac{1}{2} \sum \limits_{i = 1}^{N} (x_i-[B_{\mu}(z)]_i)^2                   
         + \lambda \mu \sum \limits_{i = 1}^{N} \rho_a(x_i)  \right\}
\end{array}
\end{equation}

Since each component $x_i$ is decoupled, the above minimum can be 
calculated by minimizing with respect to each $x_i$ individually. 
For the component-wise minimization, the objective function is : 
\begin{equation}
f(x_i,z) = \frac{1}{2} (x_i-[B_{\mu}(z)]_i)^2 + \lambda \mu \rho_a(|x_i|).
\end{equation}

Then by Theorem (\ref{theorem: g formula}), the proof of our Lemma is complete.

\end{proof}


Based on Lemma \ref{lem 1}, we have the following representation theorem.
\begin{thm}  \label{theorem 2}
If $x^* = (x^*_1,x_2^*,...,x^*_N)^T$ is a TL1 regularized solution 
of (\ref{model: uncons-optim}) with $a$ and $\lambda$ being positive constants, 
and $0 < \mu < \|A\|^{-2}$, then letting
$t = t^*_2 1_{\left\{ \lambda \mu \leq \frac{a^2}{2(a+1)} \right\}} 
     + t^*_3 1_{\left\{ \lambda \mu > \frac{a^2}{2(a+1)} \right\}} $, 
the optimal solution satisfies
\begin{equation}
x_i^* = \left\{ 
\begin{array}{ll}
g_{\lambda \mu }([B_{\mu}(x^*)]_i) , 
	& \text{ if }  |[B_{\mu}(x^*)]_i| > t \\
0, 
    & others.
\end{array}
\right.
\end{equation}
\end{thm}

\begin{proof} 
The condition $0 < \mu < \|A\|^{-2}$ implies
\begin{equation}
\begin{array}{lll}
C_{\mu}(x,x^*) & = &  \mu\{  \frac{1}{2}\|y - Ax\|_2^2 + \lambda P_a(x) \} \\ 
               &   &  + \frac{1}{2} \{ - \mu \|Ax-Ax^*\|_2^2 + \|x-x^*\|_2^2 \} \\
            & \geq &  \mu\{  \frac{1}{2}\|y - Ax\|_2^2 + \lambda P_a(x) \} \\
            & \geq &  C_{\mu}(x^*,x^*),
\end{array}
\end{equation}
for any $x \in \Re^N$. So it shows that $x^*$ is a minimizer of $C_{\mu}(x,x^*)$ 
as long as $x^*$ is a TL1 solution of (\ref{model: uncons-optim}). 
In view of Lemma (\ref{lem 1}), we finish the proof.

\end{proof}


%



\section{TL1 Thresholding Algorithms} \label{section:algorithm}
\setcounter{equation}{0}
In this section, we propose 3 iterative thresholding algorithms for regularized 
TL1 optimization problem (\ref{model: uncons-optim}), based on Theorem 
\ref{theorem 2}. 

We want to introduce a thresholding operator $G_{\lambda \mu, a}(\cdot): \Re \rightarrow \Re $ as
\begin{equation} \label{alg: threshold operator}
G_{\lambda \mu, a}(w) = \left\{
\begin{array}{ll}
0, & \quad  \text{if} \ \ |w| \leq t; \\
g_{\lambda \mu}(w), & \quad \text{if} \ \ |w| > t.
\end{array} \right.
\end{equation}
and expand it to vector space $\Re^N$,  
\[G_{\lambda \mu,a}(x) = \left( G_{\lambda \mu, a}(x_1), ... , G_{\lambda \mu, a}(x_N)  \right). \]
According to Theorem \ref{theorem 2}, optimal solution of model 
(\ref{model: uncons-optim}) satisfies representation equation 
\begin{equation} \label{equa: fixed representation}
x = G_{\lambda \mu, a}(B_{\mu}(x)).
\end{equation} 

\subsection{Direct Fixed Point Iterative Algorithm --- DFA }
A natural idea is to develop an iterative algorithm based on the above fixed point representation directly, with fixed 
values for parameters: 
$\lambda, \mu$ and $a$. 
We call it direct fixed point iterative algorithm (DFA), 
for which the iterative scheme is 
\begin{equation} \label{alg: DFA}
x^{n+1} = G_{\lambda \mu, a}(x^n + \mu A^T(y - Ax^n)) = G_{\lambda \mu, a}(B_{\mu}(x^n)),
\end{equation}
at $(n+1)$-th step.  
Recall that the thresholding parameter $t$ is: 
\begin{equation} \label{form: threshold lambda}
t= \left\{
\begin{array}{ll}
t^*_2 = \lambda \mu \frac{a+1}{a}, & \quad \text{if} \ \ \lambda \leq \frac{a^2}{2(a+1) \mu}, \\
t^*_3 = \sqrt{2\lambda \mu (a+1)} - \frac{a}{2},   & \quad \text{if} \ \ \lambda > \frac{a^2}{2(a+1) \mu}.
\end{array} \right.
\end{equation}

In DFA, we have 2 tuning parameters: product term $\lambda \mu$ and
TL1 parameter $a$, which are fixed and can be determined by cross-validation based on different categories of matrix $A$. 
Two adaptive iterative thresholding (IT) algorithms will be introduced later.  

\begin{rem}
In TL1 proximal thresholding operator $G_{\lambda \mu, a}$, the threshold value $t$ 
varies with other parameters:
\[
	t = t^*_2 \, I_{\left\{ \lambda \mu \leq \frac{a^2}{2(a+1)} \right\}} 
    		+ t^*_3 \, I_{\left\{ \lambda \mu > \frac{a^2}{2(a+1)} \right\}}.
\]
Since $t\geq t^*_3 = \sqrt{2\lambda \mu (a+1)} - \frac{a}{2}$, the larger 
the $\lambda$, the larger the threshold value $t$, and therefore the sparser the solution 
from the thresholding algorithm.
\end{rem}

\begin{figure}
\begin{tabular}{ll}
\begin{minipage}[t]{0.45\linewidth}
\includegraphics[scale=0.35]{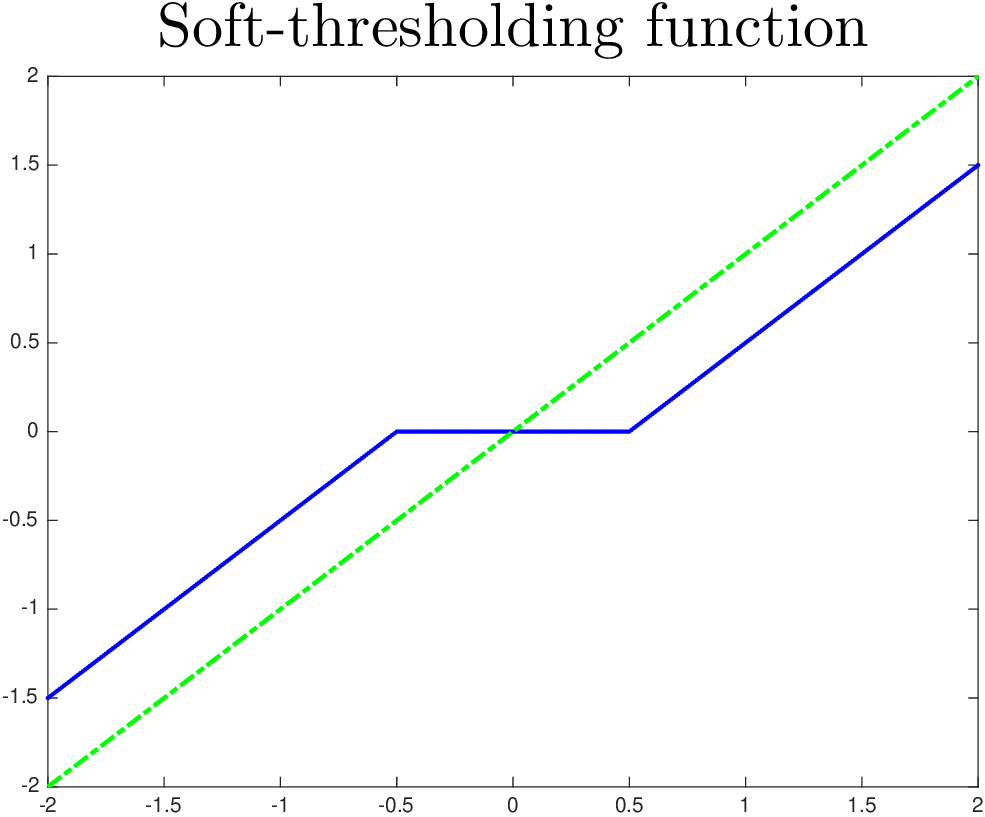}
\end{minipage}  &
\begin{minipage}[t]{0.45\linewidth}
\includegraphics[scale=0.35]{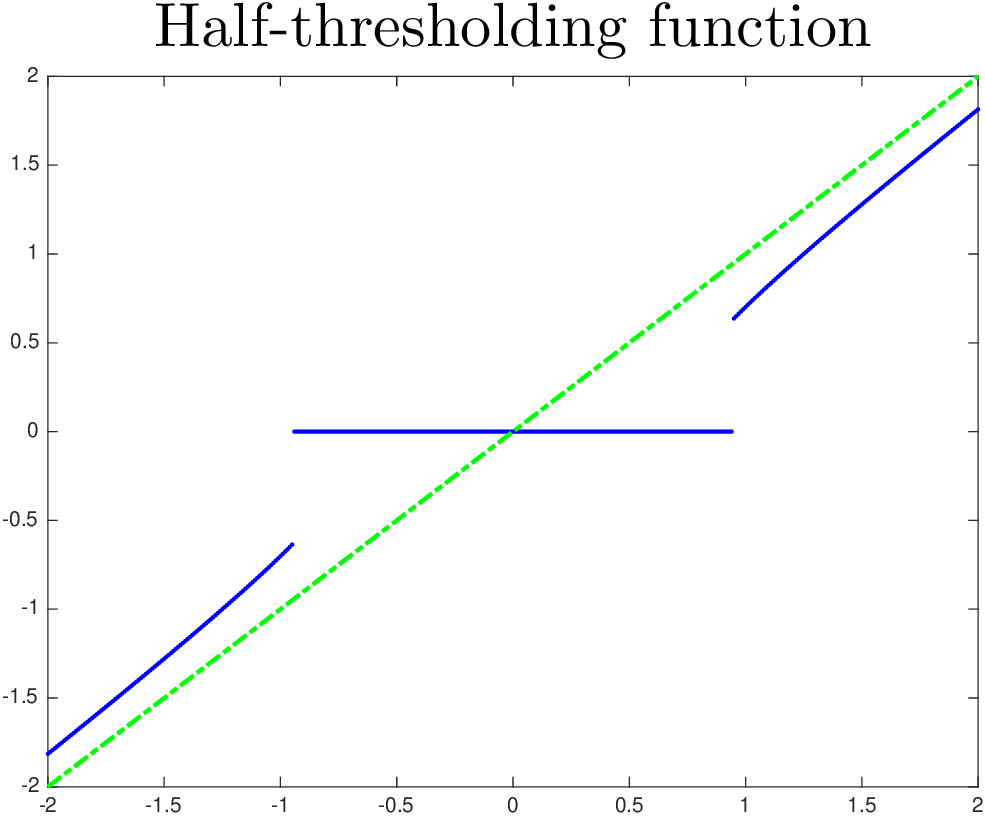}
\end{minipage}  \\
\begin{minipage}[t]{0.45\linewidth}
\includegraphics[scale=0.35]{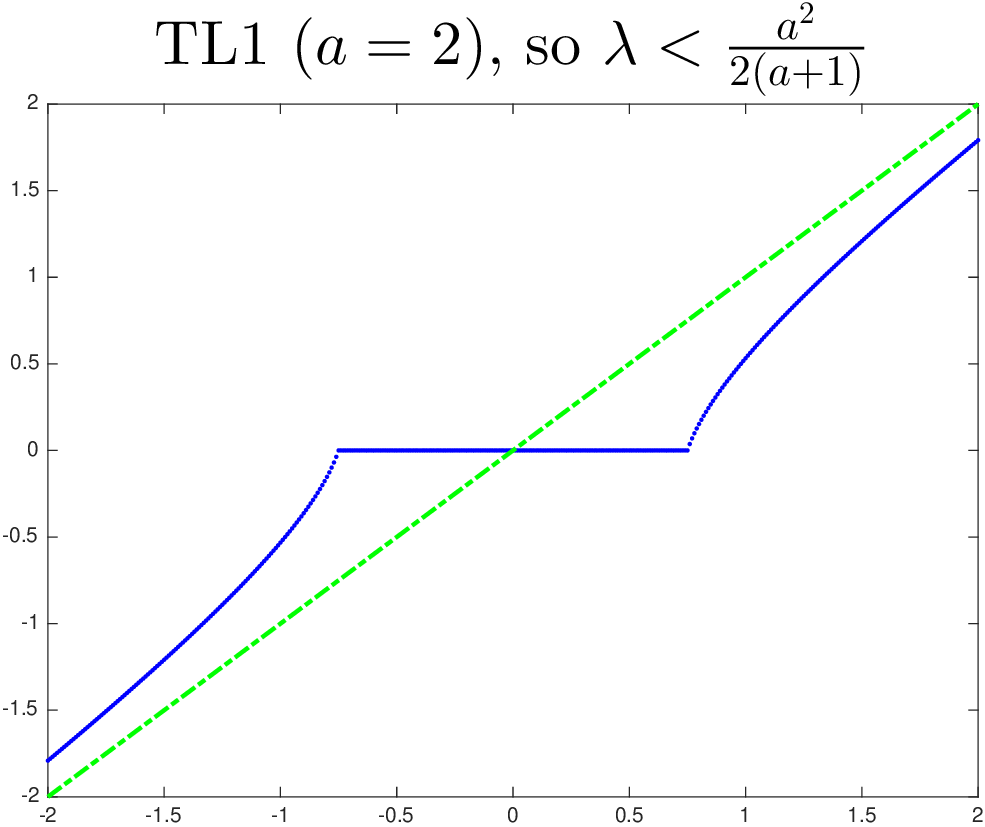}
\end{minipage}  &
\begin{minipage}[t]{0.45\linewidth}
\includegraphics[scale=0.35]{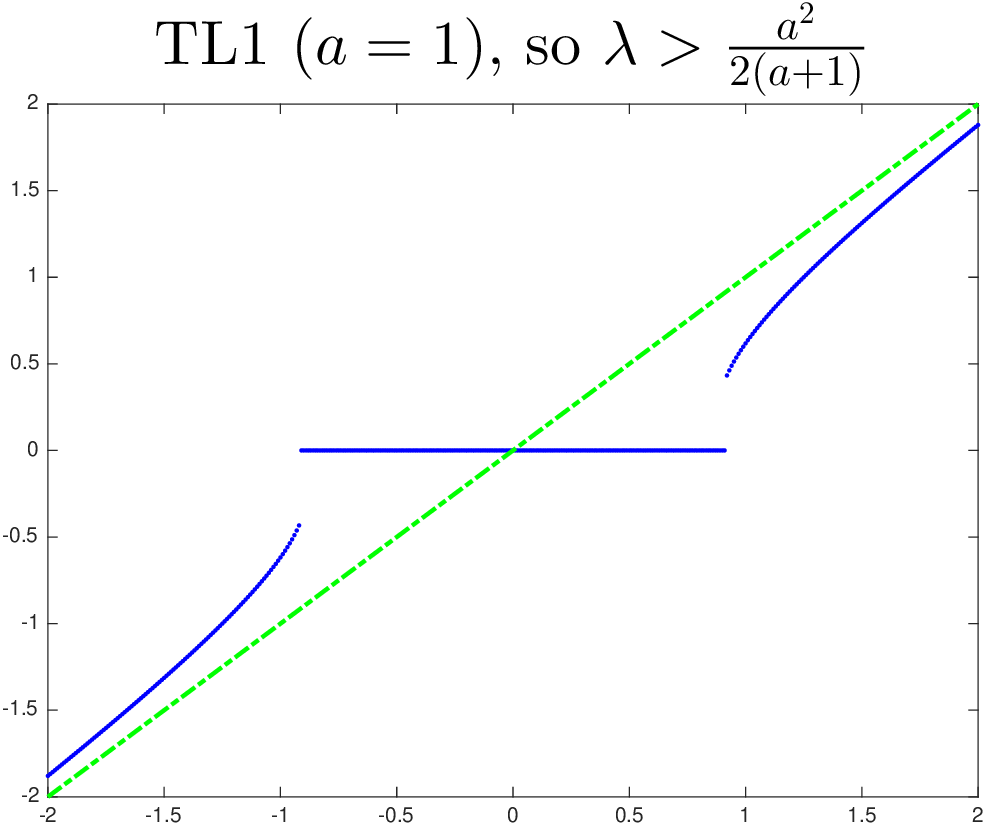}
\end{minipage}  \\
\end{tabular}
\caption{Soft/half (top left/right), TL1 (sub/super critical, lower left/right) 
 thresholding functions at $\lambda = 1/2$.}
\label{figure: threshold plot}
\end{figure}


It is interesting to compare the TL1 thresholding function with the hard/soft thresholding 
function of $l_0$/$l_1$ regularization, and the half thresholding function of 
$l_{1/2}$ regularization. These three  
functions 
(\cite{hard-threshold-blumensath2008iterative,soft-threshold-lp-daubechies2004iterative,xian-half}) are: 
\begin{equation}
H_{\lambda,0}(x) = \left\{ 
\begin{array}{ll}
x, & \ \ |x| > (2\lambda)^{1/2} \\
0, & \ \ {\rm otherwise}
\end{array} \right.
\end{equation} 

\begin{equation}
H_{\lambda,1}(x) = \left\{ 
\begin{array}{ll}
x-sgn(x)\, \lambda, & \ \ |x| > \lambda \\
0, & \ \ {\rm otherwise}
\end{array} \right.
\end{equation} 

and 
\begin{equation}
H_{\lambda,1/2}(x) = \left\{ 
\begin{array}{ll}
f_{2\lambda,1/2}(x), & \ \ |x| > \frac{(54)^{1/3}}{4}(2\lambda)^{2/3} \\
0, & \ \ {\rm otherwise}
\end{array} \right.
\end{equation} 
where $f_{\lambda,1/2}(x) = \frac{2}{3}x \left( 1 + \cos(\frac{2 \pi}{3} 
         - \frac{2}{3} \Phi_{\lambda}(x) ) \right)$ 
and $\Phi_{\lambda}(x) = \arccos(\frac{\lambda}{8}(\frac{|x|}{3})^{-\frac{3}{2}})$.

In \FIG\ref{figure: threshold plot}, we plot the closed-form thresholding formulas 
(\ref{TL1thrfunc}) for $\lambda \leq$ and  $ \lambda > \frac{a^2}{2(a+1)}$ respectively.  
We observe and prove that when  $\lambda < \frac{a^2}{2(a+1)}$, the TL1 
threshold function is continuous (Appendix C), same as soft-thresholding function. 
While if $\lambda > \frac{a^2}{2(a+1)}$, the TL1 thresholding  
function has a jump discontinuity at threshold, similar to half-thresholding function.  
For different threshold scheme, it is believed that continuous formula is more
stable, while discontinuous formula separates nonzero and trivial coefficients
more efficiently and sometimes converges faster  \cite{SparseNet}.

\subsection{Convergence Theory for DFA}
\label{section: convergence}
We establish the convergence theory for direct fixed point iterative algorithm,
similar to \cite{xian-half-theory-2014,xian-half,DCATL1}.
Recall in (\ref{func surrogate}), we introduced two functions $C_{\lambda}(x)$  
(the objective function in TL1 regularization),  and $C_{\mu}(x,z)$. They will appear
in the proof of:
\medskip

\begin{thm}  \label{theorem: convergence}
Let $\{ x^n \}$ be the sequence generated by the iteration scheme 
(\ref{alg: DFA}) under the condition $\|A\|^2 < 1/\mu $. Then: 
\begin{enumerate}
\item[1)] $\{ x^n \}$ is a minimizing sequence of the function $C_{\lambda}(x)$.
If the initial vector $x^0 = 0$ and $\lambda > \frac{\| y\|^2}{2(a+1)}$, 
the sequence $\{ x^n \}$ is bounded. 
\item[2)] $\{ x^n \}$ is asymptotically regular, i.e. 
$\lim \limits_{n \rightarrow \infty} \| x^{n+1} -x^n \| = 0$. 
\item[3)] Any limit point $x^*$ of $\{ x^n \}$ is a stationary point satisfying 
equation (\ref{equa: fixed representation}), that is 
$x^* = G_{\lambda \mu, a}( B_{\mu}(x^*) )$.
\end{enumerate}
\end{thm}
\begin{proof}
\begin{enumerate}
\item[1)]
From the proof of Lemma (\ref{lem 1}), we can see that 
\[ C_{\mu} (x ^{n+1}, x^n) = \min \limits_x C_{\mu} (x,x^n). \]
By the definition of function $C_\lambda(x)$ and $C_{\mu}(x,z)$
 (\ref{func surrogate}), we have the following equation: 
\[ 
	C_{\lambda}(x^{n+1}) = \frac{1}{\mu} \left[  C_{\mu}(x^{n+1}, x^n) 
	- \frac{1}{2} \|x^{n+1} - x^n\|_2^2 \right]
									+  \frac{1}{2} \| Ax^{n+1} - Ax^n\|_2^2
\]
Further since $\|A\|^2 < 1/\mu$, 
\begin{equation} \label{inequ: C_lambda x^n x^n+1}
\begin{array}{ll}
C_{\lambda}(x^{n+1}) 
	& \leq  \frac{1}{\mu} \left\{  C_{\mu}(x^{n}, x^n) 
		- \frac{1}{2} \|x^{n+1} - x^n\|_2^2 \right\}	
		+  \frac{1}{2} \| Ax^{n+1} - Ax^n\|_2^2  \\
	& = C_{\lambda}(x^n) + \frac{1}{2}( \| A(x^{n+1} - x^n)\|_2^2 
		- \frac{1}{\mu} \| x^{n+1} - x^n \|_2^2 ) \\
	& \leq C_{\lambda}(x^n) 
\end{array}
\end{equation}
So we know that sequence $\{ C_{\lambda}(x^n) \}$ is decreasing monotonically. 

In DFA, if we set trivial initial vector $x^0 = 0$ and parameter $\lambda$
satisfying $\lambda > \frac{\| y\|^2}{2(a+1)}$, we show that $\{ x^n\}$
is bounded. Since $\{ C_{\lambda}(x^n) \}$ is decreasing, 
\[ 
	C_{\lambda}(x^n) \leq C_{\lambda}(x^0), \ \ \ \text{for any }  n.  
\]
So we have $ \lambda P_a(x^n) \leq C_{\lambda}(x^0) $. As $\|x^n\|_\infty$
be the largest entry in absolute value of vector $x^n$, 
$ \lambda \rho_a(\|x^n\|_\infty) \leq C_{\lambda}(x^0)$.
Due to the definition of $\rho_a$, it is easy to check that the above inequality
is equivalent to 
\[
\left( \  \lambda(a+1) - C_{\lambda}(x^0) \ \right) \|x^n\|_{\infty} \leq a C_{\lambda} (x^0). 
\]
In order to bound $\{x^n\}$, we need the condition 
$ \lambda > C_{\lambda} (x^0) / (a+1) $.
Especially when $x^0$ is zero, one sufficient condition for $\{ x^n \}$ 
to be bounded is 
\[ 
	\lambda > \frac{\| y\|^2}{2(a+1)}.
\]
\item[2)]
Since $\| A \|^2 < 1/\mu$, we denote $\epsilon = 1 - \mu \| A \|^2 > 0$.
Then we have the inequality 
$\mu \| A(x^{n+1} - x^n)\|^2_2 \leq (1-\epsilon) \| x^{n+1} - x^n \|^2$,
which can be rewritten as 
\[
	 \| x^{n+1} - x^{n}\|^2 \leq \frac{1}{\epsilon} \| x^{n+1} - x^{n}\|^2
	-   \frac{\mu}{\epsilon}  \| A(x^{n+1} - x^n)\|^2_2.
\]
In the above inequality, we sum the index $n$ from $1$ to $N$ and find:
\begin{equation*}
\begin{array}{ll}
\vspace{1mm}
\sum \limits_{n=1}^{N} \| x^{n+1} - x^{n}\|^2 
	& \leq \frac{1}{\epsilon} \sum \limits_{n=1}^{N} \| x^{n+1} - x^{n}\|^2
		- \frac{\mu}{\epsilon}  \sum \limits_{n=1}^{N} \| A(x^{n+1} - x^n)\|^2_2 \\
\vspace{2mm}
	& \leq \frac{\mu}{\epsilon} \sum \limits_{n=1}^{N} 
		2 \left( C_\lambda(x^n) - C_\lambda(x^{n+1} ) \right) \\
	& \leq \frac{2 \mu}{\epsilon} C_\lambda(x^0), 
\end{array}
\end{equation*}
where the last second inequality comes from 
(\ref{inequ: C_lambda x^n x^n+1}) above . 
Thus the infinite sum of sequence $\| x^{n+1} - x^{n}\|^2$ is convergent, which
 implies that 
 \[
 	\lim \limits_{n \rightarrow \infty} \| x^{n+1} - x^{n}\| =  0.
 \]
 
\item[3)]
Denote
$L_{\lambda,\mu}(z,x) = \frac{1}{2} \|z - B_{\mu}(x) \|^2 + \lambda \mu P_a(z)$
and 
\[
	D_{\lambda,\mu}(x) = L_{\lambda,\mu}(x,x) - \min \limits_z L_{\lambda,\mu}(z,x).
\]
By its definition and the proof of Lemma \ref{lem 1} 
(especially (\ref{equa: optimal proximal})), we have $D_{\lambda,\mu}(x) \geq 0$ 
and
\[
	D_{\lambda,\mu}(x) = 0 \text{ if and only if $x$ satisfies (\ref{equa: fixed representation})}. 
\]
Assume that $x^*$ is a limit point of $\{x^n\}$ and a subsequence of $x^n$ (still denoted the same)
converges to it. Because of DFA iterative scheme (\ref{alg: DFA}), we have 
$x^{n+1} = arg \min_z L_{\lambda,\mu}(z,x^n)$, which implies that
\begin{equation*}
\begin{array}{l}
D_{\lambda,\mu}(x^n) 
\vspace{1mm}
	 = L_{\lambda,\mu}(x^n,x^n) - L_{\lambda,\mu}(x^{n+1},x^n) \\
	 = \lambda \mu \left( P_a(x^n) - P_a(x^{n+1}) \right) 
		- \frac{1}{2} \| x^{n+1} - x^n \|^2
		+ \left\langle \mu A^t (Ax^n - y), x^n - x^{n+1} \right\rangle 
\end{array}
\end{equation*}
Thus we know
\begin{equation*}
\begin{array}{l}
\vspace{1mm}
	\lambda P_a(x^n) - \lambda P_a(x^{n+1}) \\
	= \frac{1}{2\mu} \|x^{n+1}- x^n\|^2
		+ \frac{1}{\mu} D_{\lambda,\mu}(x^n) 
		+ \left\langle A^t (Ax^n - y), x^n - x^{n+1} \right\rangle, 
\end{array}
\end{equation*}
from which we get
\begin{equation*}
\begin{array}{ll}
C_{\lambda}(x^n) - C_{\lambda}(x^{n+1}) 
\vspace{1mm}
	& = \lambda P_a(x^n) - \lambda P_a(x^{n+1}) 
		+ \frac{1}{2} \|Ax^n-y\|^2 - \frac{1}{2} \|Ax^{n+1}-y\|^2 \\
\vspace{1mm}
	& =  \frac{1}{2\mu} \|x^{n+1}- x^n\|^2 + \frac{1}{\mu} D_{\lambda,\mu}(x^n)
		- \frac{1}{2} \| A(x^n - x^{n+1}) \|^2_2 \\
	& \geq  \frac{1}{\mu} D_{\lambda,\mu}(x^n) 
		+ \frac{1}{2}(\frac{1}{\mu} - \|A\|^2) \| x^n - x^{n+1}\|^2.
\end{array}
\end{equation*}
So $0 \leq D_{\lambda,\mu}(x^n)  \leq 
	\mu \left( C_{\lambda}(x^n) - C_{\lambda}(x^{n+1})  \right)$.
Also we know from part (1) of this theorem that $\{ C_{\lambda}(x^n)\}$ 
converges, so $\lim \limits_{n \rightarrow \infty} D_{\lambda,\mu}(x^n) = 0$. 
Thus as the limit point of the sequence $x^n$, the point 
$x^*$ satisfies equation (\ref{equa: fixed representation}).

\end{enumerate}
\end{proof}

\subsection{Semi-Adaptive Thresholding Algorithm --- TL1IT-s1} 
\label{section: scheme 1}
In the following 2 subsections, we present two adaptive parameter TL1 algorithms. 
We begin with formulating an optimality condition on the 
regularization parameter $\lambda$, 
which serves as the basis for 
parameter selection and updating in the semi-adaptive algorithm. 

Let us consider the so called $k$-sparsity problem for  
(\ref{model: uncons-optim}). The solution is $k$-sparse by prior knowledge or estimation. 
For any $\mu$, denote $B_{\mu}(x) = x + \mu A^T(b-Ax)$ and $|B_{\mu}(x)|$ 
is the vector from taking absolute value 
of each entry of $B_{\mu}(x)$. Suppose that $x^*$ is the TL1 solution, and 
without loss of generality,  
$|B_{\mu}(x^*)|_1 \geq |B_{\mu}(x^*)|_2 \geq ... \geq |B_{\mu}(x^*)|_N$. 
Then, the following inequalities hold: 
\begin{equation}
\begin{array}{lll}
|B_{\mu}(x^*)|_i > t & \Leftrightarrow &  i \in \{ 1,2,...,k \}, \\
|B_{\mu}(x^*)|_j \leq t & \Leftrightarrow &  j \in \{ k+1,k+2,...,N \}, 
\end{array}
\end{equation}
where $t$ is our threshold value.

Recall that $t^*_3 \leq t \leq t^*_2$. So 
\begin{equation}
\begin{array}{l}
|B_{\mu}(x^*)|_{k} \geq t \geq t^*_3 = \sqrt{2\lambda \mu (a+1)} - \frac{a}{2}; \\
|B_{\mu}(x^*)|_{k+1} \leq t \leq t^*_2 = \lambda \mu \frac{a+1}{a}. 
\end{array}
\end{equation}
It follows that
\begin{equation*}
\lambda_1 \equiv \dfrac{a |B_{\mu}(x^*)|_{k+1}}{\mu (a+1)} \leq \lambda 
\leq \lambda_2 \equiv \dfrac{(a+2|B_{\mu}(x^*)|_{k})^2}{8(a+1)\mu} 
\end{equation*}
or $\lambda^* \in [\lambda_1, \lambda_2]$.

\begin{algorithm}[h]
\caption{TL1 Thresholding Algorithm ---  TL1IT-s1}
 \textbf{Initialize:} 
 $x^0$; \ \ $\mu_0= \frac{(1-\varepsilon)}{\|A\|^2}$ and $a$\;
 \While{not converged}{
  $\mu  = \mu_0$; \ \	$ z^n := B_{\mu}(x^n) = x^n + \mu A^T(y-Ax^n) $; \\
  $\lambda_{1}^{n} = \dfrac{a |z^n|_{k+1}}{\mu (a+1)}$; \ \
  $\lambda_{2}^{n} = \dfrac{(a+2|z^n|_{k})^2}{8(a+1)\mu }$\;
  \eIf{$\lambda_{1}^{n} \leq \frac{a^2}{2(a+1)\mu}$}{
   $\lambda = \lambda_{1}^{n}$; \ \ 
   $t = \lambda \mu \frac{a+1}{a}$\;
   for i = 1:length(x)\:  \\
   \ \ \  if $|z^n(i)| > t$, then
   $x^{n+1}(i) = g_{\lambda \mu }(z^n(i))$; \\
   \ \ \ if $|z^n(i)| \leq t$, then
   $x^{n+1}(i) = 0$.
   }{
   $\lambda = \lambda_{2}^{n}$; \ \ 
   $t = \sqrt{2\lambda \mu (a+1)} - \frac{a}{2}$ \;
   for i = 1:length(x)\: \\
   \ \ \ if $|z^n(i)| > t$, then
   $x^{n+1}(i) = g_{\lambda \mu}(z^n(i))$; \\
   \ \ \ if $|z^n(i)| \leq t$, then
   $x^{n+1}(i) = 0$.
  }
  $n \rightarrow n+1$\;
 }
\end{algorithm}

The above estimate helps to set optimal regularization 
parameter. A choice of $\lambda^*$ is
\begin{equation} \label{equ: lambda}
\lambda^* = \left\{
\begin{array}{ll}
\lambda_1, & \quad \text{if} \ \ 
   \lambda_1 \leq \frac{a^2}{2(a+1)\mu}, \ \ \text{then} \ \ \lambda^* \leq \frac{a^2}{2(a+1)\mu} 
                                         \Rightarrow t = t^*_2; \\
\lambda_2, & \quad \text{if} \ \ 
   \lambda_1 > \frac{a^2}{2(a+1)\mu},    \ \ \text{then} \ \ \lambda^* > \frac{a^2}{2(a+1)\mu}
                                        \Rightarrow t = t^*_3.
\end{array} \right.
\end{equation}
In practice, we approximate $x^*$ by $x^n$ in 
(\ref{equ: lambda}), so 
\[ 
	\lambda_1 = \dfrac{a |B_{\mu}(x^n)|_{k+1}}{\mu (a+1)},  \ \ \ \   
	\lambda_2 = \dfrac{(a+2|B_{\mu}(x^n)|_{k})^2}{8(a+1)\mu},
\]
at each iteration step. So we have an adaptive iterative algorithm 
without pre-setting the regularization parameter $\lambda$. 
Also the TL1 parameter $a$ is still free (to be selected), 
thus this algorithm is overall semi-adaptive, which is named 
TL1IT-s1 for short and summarized in Algorithm 1.

\subsection{Adaptive Thresholding Algorithm --- TL1IT-s2}
For TL1IT-s1 algorithm, at each iteration step, it is required to compare 
$\lambda_n$ and $\frac{a^2}{2(a+1)\mu}$. Here instead, we vary TL1 
parameter `a' and choose $a=a_n$ in each iteration, 
such that the inequality  $\lambda_n \leq \frac{a^2_n}{2(a_n +1)\mu_n}$ holds.

The thresholding scheme is now simplified to just one 
threshold parameter $t = t^*_2$. 
Putting $\lambda = \frac{a^2}{2(a +1) \mu}$ at critical value, 
the parameter $a$ is expressed as: 
\begin{equation} \label{form: scheme 2 a lambda}
a = \lambda\mu + \sqrt{(\lambda \mu)^2 + 2 \lambda \mu }.
\end{equation}
The threshold value is: 
\begin{equation} \label{form: scheme 2 t}
t  =  t^*_2 = \lambda \mu \frac{a+1}{a} 
   = \frac{\lambda\mu}{2} + \frac{\sqrt{(\lambda \mu)^2 + 2 \lambda \mu }}{2}.
\end{equation}

Let $x^*$ be the TL1 optimal solution. Then we have the following inequalities: 
\begin{equation}
\begin{array}{lll}
|B_{\mu}(x^*)|_i > t & \Leftrightarrow &  i \in \{ 1,2,...,k \}, \\
|B_{\mu}(x^*)|_j \leq t & \Leftrightarrow &  j \in \{ k+1,k+2,...,N \}.
\end{array}
\end{equation}
So, for parameter $\lambda$, we have: 
$$ \dfrac{1}{\mu} \dfrac{2|B_{\mu}(x^*)|_{k+1}^2}{1+2|B_{\mu}(x^*)|_{k+1}} \leq \lambda 
\leq \dfrac{1}{\mu} \dfrac{2|B_{\mu}(x^*)|_{k}^2}{1+2|B_{\mu}(x^*)|_{k}}.$$
Once the value of $\lambda$ is determined, the parameter $a$ is 
given by (\ref{form: scheme 2 a lambda}).

In the iterative method, we approximate the optimal solution $x^*$ by 
$x^n$. The resulting parameter selection is:
\begin{equation}
\begin{array}{l}
\lambda_n = \dfrac{1}{\mu_n} \dfrac{2|B_{\mu_n}(x^*)|_{k+1}^2}{1+2|B_{\mu_n}(x^*)|_{k+1}}; \\
a_n = \lambda_n \mu_n + \sqrt{(\lambda_n \mu_n)^2 + 2 \lambda_n \mu_n }.
\end{array}
\end{equation}

In this algorithm (TL1IT-s2 for short), only parameter $\mu$ is fixed and 
$\mu \in (0, \|A\|^{-2})$. The summary is below (Algorithm 2).  

\begin{algorithm}[h]
\caption{Adaptive TL1 Thresholding Algorithm ---  TL1IT-s2}
 \textbf{Initialize:} 
 $x^0$, $\mu_0= \frac{(1-\varepsilon)}{\|A\|^2}$\;
 \While{not converged}{
  $\mu = \mu_0$; \ \ $ z^n := x^n + \mu\, A^T(y-Ax^n) $\;
  $\lambda_n = \dfrac{1}{\mu}\dfrac{2|z^n_{k+1}|^2}{1+2|z^n_{k+1}|}$\;
  \vspace{1mm}
  $a_n = \lambda_n \mu + \sqrt{(\lambda_n \mu )^2 + 2 \lambda_n \mu }$\;  
  \vspace{1mm}
  $t = \frac{\lambda_n \mu}{2} + \frac{\sqrt{(\lambda_n \mu)^2 + 2 \lambda_n \mu }}{2}$\;
  \vspace{1mm}
  for i = 1:length(x) \\ 
   \ \ \ if $|z^n(i)| > t$, then
   $x^{n+1}(i) = g_{\lambda_n \mu }(z^n(i))$; \\
   \ \ \ if $|z^n(i)| \leq t$, then
   $x^{n+1}(i) = 0$. \\
  $n \rightarrow n+1$\;
 }
\end{algorithm}



\section{Numerical Experiments} \label{section:experiment}
\setcounter{equation}{0}
In this section, we carried out a series of numerical experiments to demonstrate the  
performance of the TL1 thresholding algorithm: semi-adaptive TL1IT-s1. 
All the experiments here are conducted by applying our algorithm to sparse 
signal recovery in compressed sensing. 
Two classes of randomly generated sensing matrices are used 
to compare our algorithms with the  
state-of-the-art iterative non-convex thresholding solvers:   
\textbf{Hard-thresholding} \cite{hard-sparsify-blumensath2012accelerated}, 
\textbf{Half-thresholding} \cite{xian-half}. Here all these thresholding algorithms
need a sparsity estimation to accelerate convergence. Also the Hard 
Thresholding algorithm (AIHT) in \cite{hard-sparsify-blumensath2012accelerated}  
has an additional double over-relaxation step for significant 
speedup in convergence. In the following run time comparison of the three algorithms,
AIHT is clearly the most efficient under the uncorrelated Gaussian sensing matrix. 

We also tested on the adaptive scheme: TL1IT-s2. However, its performance 
is always no better than TL1IT-s1, and so its results are not shown here.
We suggest to use TL1IT-s1 first in CS applications.  
That TL1IT-s2 is not as competitive as TL1IT-s1 may be attributed to its 
limited thresholding scheme. Utilizing double thresholding schemes is 
helpful for TL1IT.  We noticed in our computations that at the beginning
of iterations, the $\lambda_n$'s cross the critical value 
$\frac{a^2}{2(a+1)\mu}$ frequently. Later on, they tend to stay on one side, depending on 
the sensing matrix $A$. However, the sub-critical threshold 
is used for all $A$'s in TL1IT-s2. 

Here we compare only the non-convex iterative thresholding methods, 
and did not include the soft-thresholding algorithm.
 The two classes of random matrices are: 
\begin{itemize}
   \item[1)] Gaussian matrices.
   \item[2)] Over-sampled discrete cosine transform (DCT) matrices with factor $F$.
\end{itemize}

All our tests were performed on a $Lenovo$ desktop: 16 GB of RAM and Intel Core processor  
$i7-4770$ with CPU at $3.40GHz \times 8 $ under 64-bit Ubuntu system.

The TL1 thresholding algorithms do not guarantee a global minimum in general, 
due to nonconvexity. Indeed we observed that TL1 thresholding with 
random starts may get stuck at local minima especially when the matrix $A$ 
is ill-conditioned (e.g. $A$ has a large condition number or is highly coherent). 
A good initial vector $x^0$ is important for thresholding algorithms. 
In our numerical experiments, instead of having $x^0 = 0$ or random, 
we apply YALL1 (an alternating direction $l_1$ method, \cite{yall1}) 
a number of times, e.g.  20 times, to produce a better initial guess $x^0$. 
This procedure is similar to algorithm DCATL1 \cite{DCATL1} initiated at 
zero vector so that the first step of DCATL1 
reduces to solving an unconstrained $l_1$ regularized problem.
For all these iterative algorithms, we implement a unified stopping criterion as
$ \frac{\| x^{n+1} - x^{n} \|}{\|x^n\|} \leq 10^{-8} $ or maximum iteration step equal to 3000.    

\begin{figure}[h]
\centering
\includegraphics[scale=0.45]{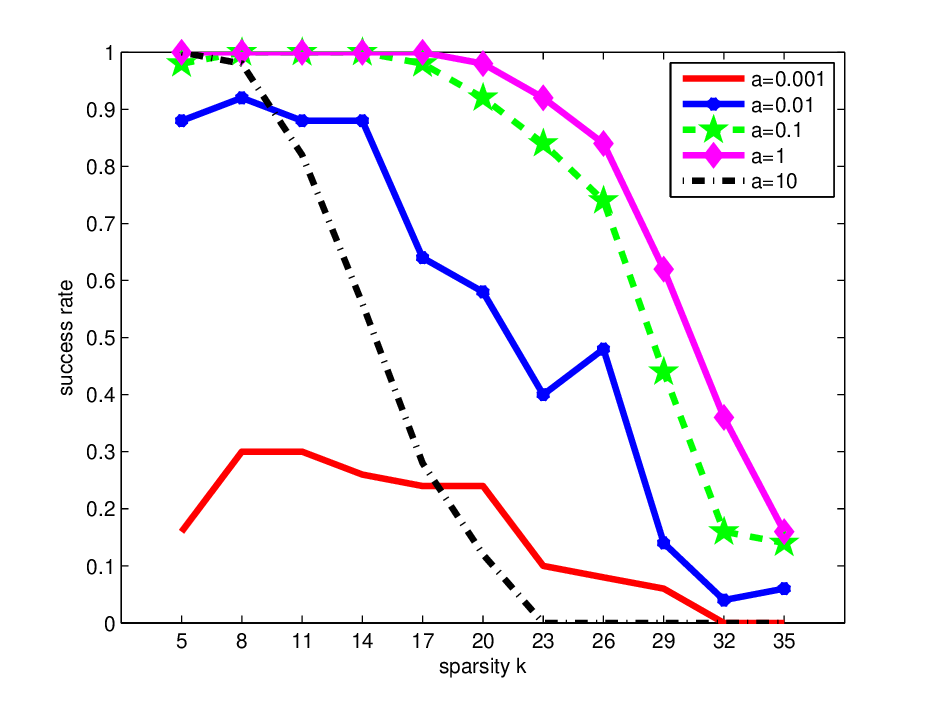}
\caption{Sparse recovery success rates for selection of parameter $a$ with $128\times 512$ Gaussian 
random matrices and TL1IT-s1 method.}
\label{fig:choice of a}
\end{figure}

\subsection{Optimal Parameter Testing for TL1IT-s1}
In TL1IT-s1, the parameter `$a$' is still free. 
When `$a$' tends to zero, the penalty function approaches the $l_0$ norm. 
We tested TL1IT-s1 on sparse vector recovery with different `$a$' values, varying among 
\{0.001, 0.01, 0.1, 1, 100 \}. 
In this test, matrix $A$ is a $128 \times 512$ random matrix, 
generated by multivariate normal distribution $\sim \mathcal{N}(0,\Sigma)$. 
Here the covariance matrix
$\Sigma = \{ 1_{(i=j)} + 0.2\times  1_{(i \neq j)} \}_{i,j}.$
The true sparse vector $x^*$ is also randomly generated under Gaussian distribution,
with sparsity $k$ from the set $\{ 8,  \ 10 , \ 12 , \ \cdots , \ 32 \}$.

For each value of `$a$', we conducted 100 test runs with 
different samples of $A$ and ground truth vector $x^*$. The recovery 
is successful if the relative error: 
$ \frac{ \|x_r - x^*\|_2 }{\|x^*\|_2} \leq 10^{-2} $.

Figure (\ref{fig:choice of a}) shows the success rate vs. sparsity using TL1IT-s1
over 100 independent trials for various parameter $a$ and sparsity $k$. 
We see that the algorithm with $a=1$ is the best among all tested parameter values. 
Thus in the subsequent computation, we set the parameter $a=1$. 
The parameter $\mu = \frac{0.99}{\|A\|^2}$.

\subsection{Signal Recovery without Noise}
\subsubsection*{Gaussian Sensing Matrix}
The sensing matrix $A$ is drawn from $\mathcal{N}(0,\Sigma)$, 
the multi-variable normal distribution with covariance matrix  
$\Sigma = \{ (1-r)\, 1_{(i=j)} + r  \}_{i,j} $,
where $r$ ranges from 0 to 0.8. The larger parameter $r$ is, the more difficult it is 
to recover the sparse ground truth vector. The matrix $A$ is 
$128\times 512$, and the sparsity $k$ varies among $\{ 5 , \ 8 , \ 11 , \cdots , \ 35 \}$. 

\begin{figure}[h]
\begin{tabular}{ll}
\begin{minipage}[t]{0.4\linewidth}
\centering 
\includegraphics[scale=0.325]{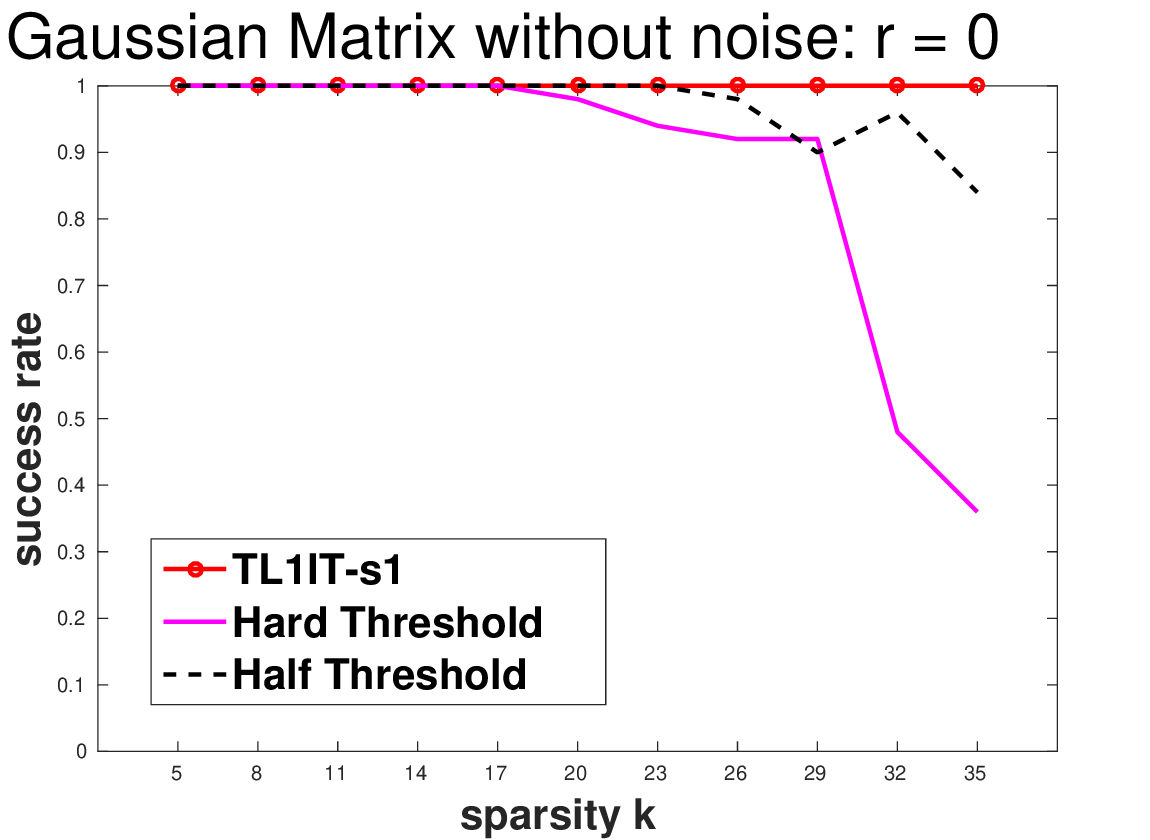}
\end{minipage}  &
\begin{minipage}[t]{0.4\linewidth}
\centering 
\includegraphics[scale=0.325]{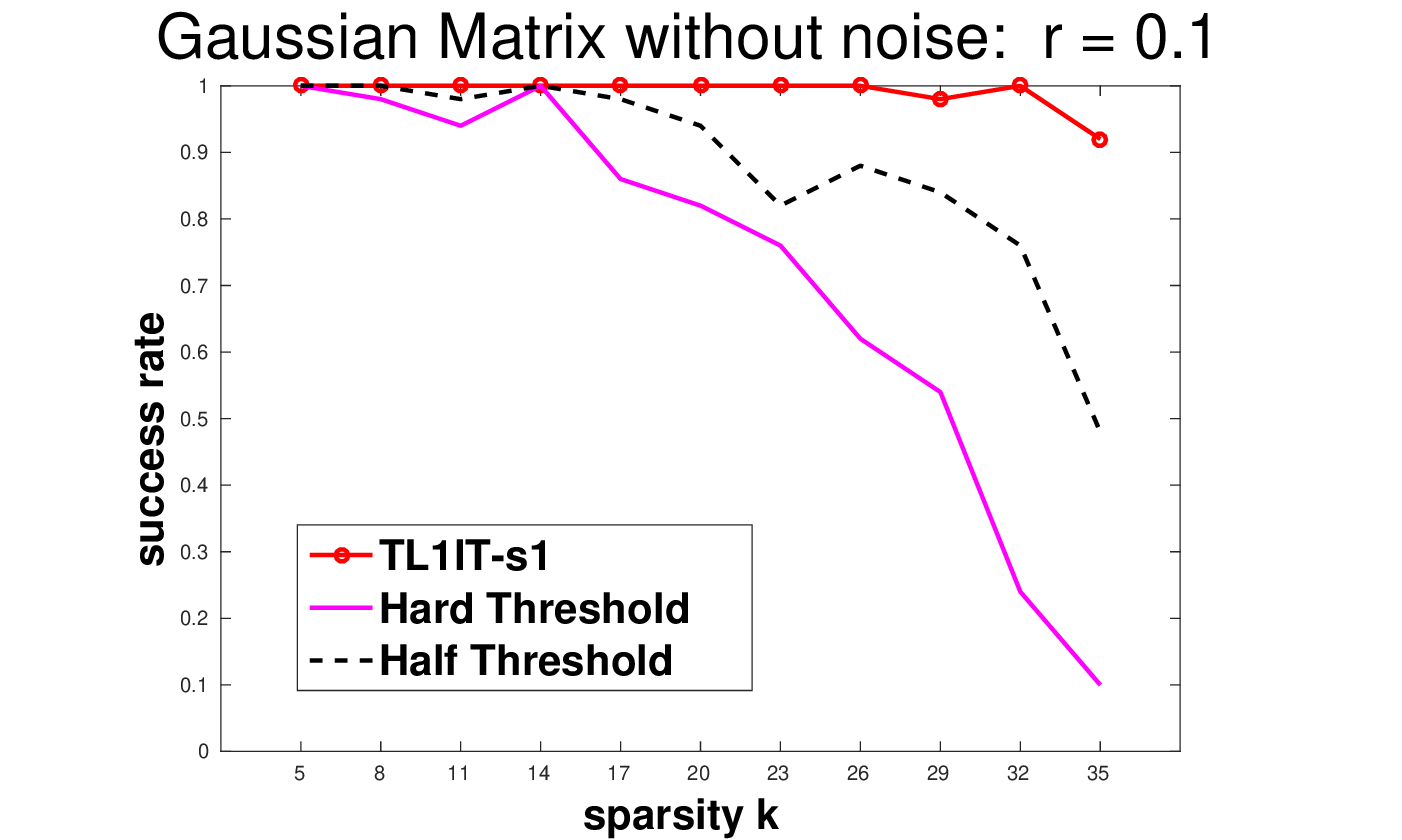}
\end{minipage}  \\
\begin{minipage}[t]{0.4\linewidth}
\centering
\includegraphics[scale=0.325]{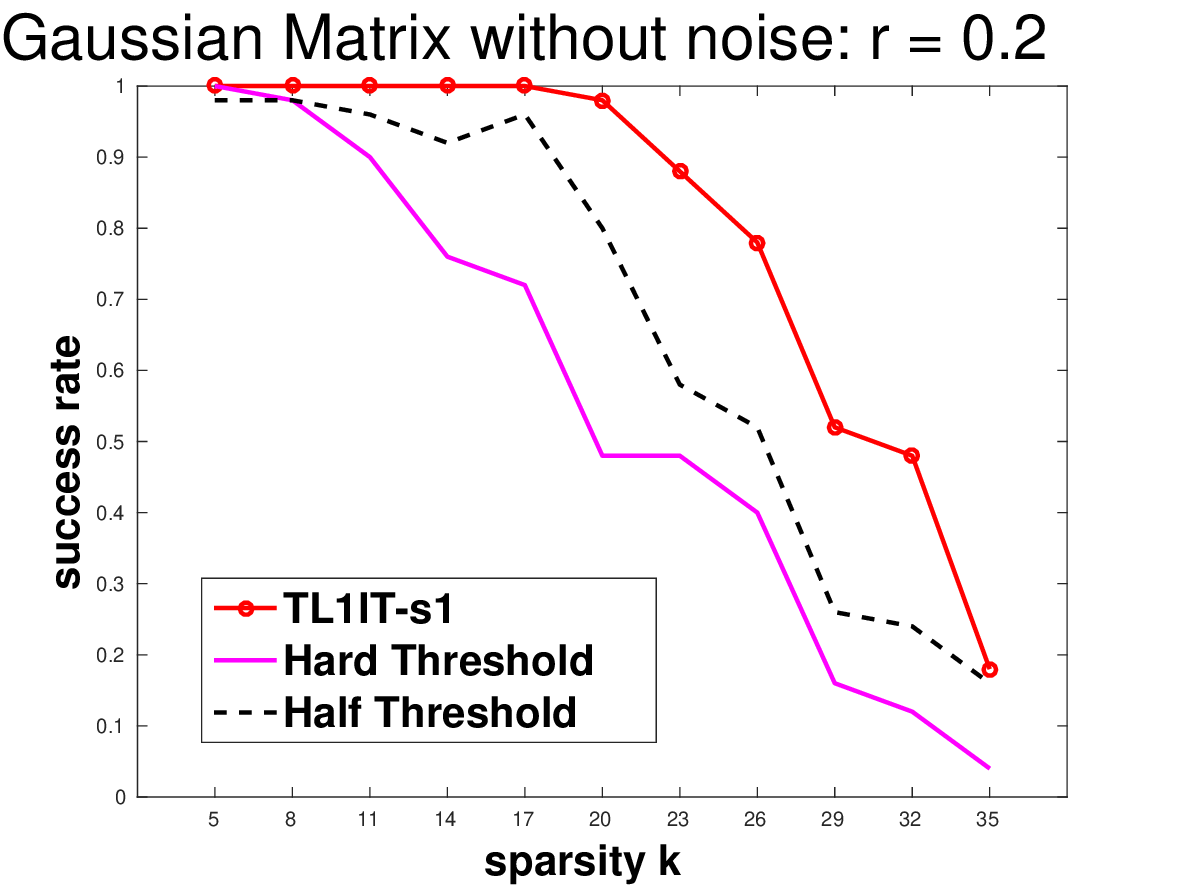}
\end{minipage}  &
\begin{minipage}[t]{0.4\linewidth}
\centering 
\includegraphics[scale=0.325]{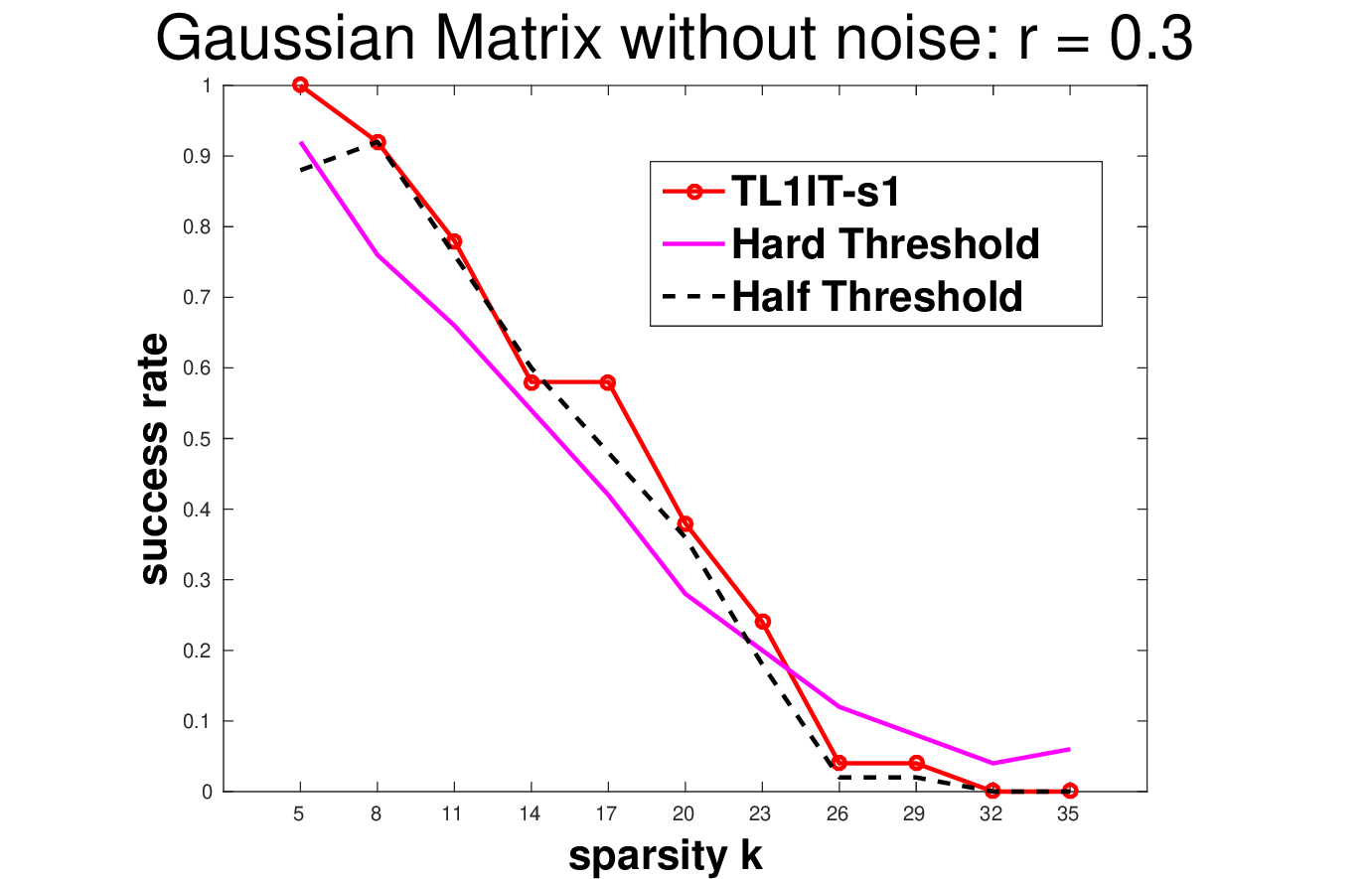}
\end{minipage}  
\end{tabular}
\caption{Sparse recovery algorithm comparison for $128\times 512$ 
Gaussian sensing matrices without measurement noise at covariance parameter $r=0,\, 0.1, \, 0.2, \, 0.3$.}
\label{figure:Gaussian}
\end{figure}  

We compare the three IT algorithms in terms of success rate averaged over 50 random trials. 
A success is recorded if the relative error of recovery is less than $0.001$. 
The success rate of each algorithm is plotted in 
Figure \ref{figure:Gaussian} with parameter $r$ from  
the set: $\{ 0 , \ 0.1 , \ 0.2 , \ 0.3 \}$. 

We see that all three algorithms can accurately recover the signal 
when $r$ and sparsity $k$ are both small. However, the success rates decline, along with 
the increase of $r$ and sparsity $k$. 
At $r= 0$, the TL1IT-s1 scheme recovers almost all testing signals from different sparsity. 
Half thresholding algorithm maintains nearly the same high success rates with a slight decrease 
when $k \geq 26$. 
At $r = 0.3$, TL1IT-s1 leads the half thresholding algorithm with a small margin. 
In all cases, TL1IT-s1 outperforms the other two, while the half thresholding algorithm 
is the second.   

\subsubsection*{Comparison of time efficiency under Gaussian measurements}
One interesting question is about the time efficiency for different thresholding
algorithms. As seen from Figure \ref{figure:Gaussian}, almost all the 3
algorithms, under Gaussian matrices with covariance parameter 
$r = 0$ and sparsity $k = 5,\cdots, 20$, 
achieve 100 \% success recovery. 
So we measured the average convergent time over 20 random tests in the
above situation (see Table $1$), where all the parameters are 
tuned to obtain relative errors around $10^{-5}$.

\begin{table}  \label{table: Time}
\centering 
\begin{tabular}{l cccccc} 
\hline\hline 
sparsity & 5 & 8 & 11 & 14 & 17 & 20 \\ [0.5ex] 
\hline 
TL1IT-s1 & 0.031 & 0.054 & 0.047 & 0.055 & 0.053 & 0.059 \\
Hard 		 & \textbf{0.003} & \textbf{0.003} & \textbf{0.005} 
& \textbf{0.006} & \textbf{0.007} & \textbf{0.007} \\
Half 	     & 0.019	 & 0.017 & 0.017 & 0.023 & 0.020 & 0.025 \\
[1ex] 
\hline 
\end{tabular} 
\caption{Time efficiency (in sec) comparison for 3 algorithms 
under Gaussian matrices.} 
\end{table} 

From the table, we know that Hard Thresholding algorithm costs the least time
among all three. So under this uncorrelated normal distribution measurement, 
Hard Thresholding algorithm is the most efficient, with Half Thresholding algorithm
the second. Though TL1IT-s1 has the lowest relative error in recovery, it takes more time. 
One reason is that TL1IT-s1 iterations go between two thresholding schemes, 
which makes it more adaptive to data for a higher computational cost. 

\subsubsection*{Over-sampled DCT Sensing Matrix} 
The over-sampled DCT matrices \cite{Fann_12,l1-l2-lou2014computing} are: 
\begin{equation}
\begin{array}[c]{l}
   A  =  [a_1,...,a_N] \in \Re^{M \times N} \\
   {\rm where} \ \ a_j = \dfrac{1}{\sqrt{M}} cos(\dfrac{2 \pi \omega (j-1)}{F}), \ \ j = 1,...,N,   \\
   \text{ and $\omega$ is a random vector, drawn uniformly from $(0,1)^M$.}     
\end{array}
\end{equation}
Such matrices appear as the real part of the complex discrete Fourier matrices in 
spectral estimation and super-resolution problems \cite{Super:candes2013mini,Fann_12}. 
An important property is their high coherence measured by the maximum of 
absolute value of cosine of the angles between each pair of column vectors of $A$. 
For a $100 \times 1000$ over-sampled DCT matrix at $F =10$, the coherence is about 0.9981, while 
at $F=20$ the coherence of the same size matrix is typically 0.9999.

\begin{figure}
\begin{tabular}{lll}
\begin{minipage}[t]{0.4\linewidth}
\includegraphics[scale=0.325]{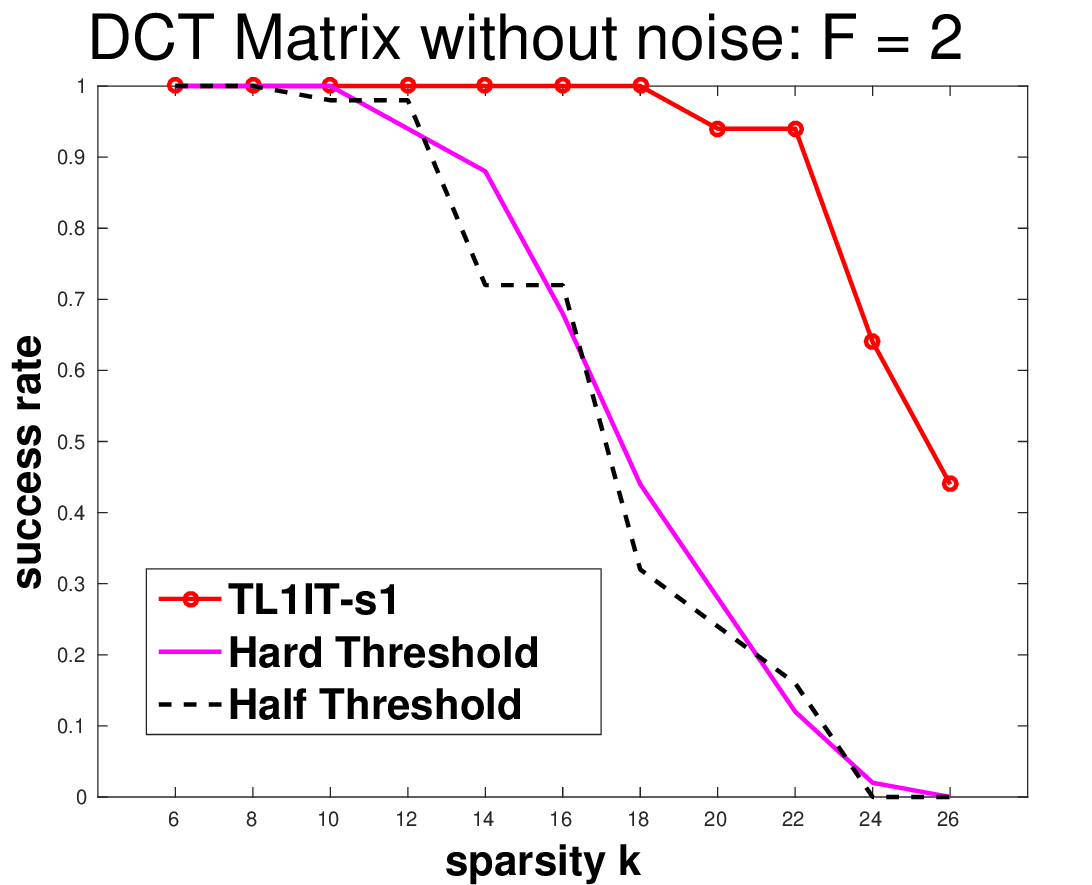}
\end{minipage}  &
\begin{minipage}[t]{0.4\linewidth}
\includegraphics[scale=0.325]{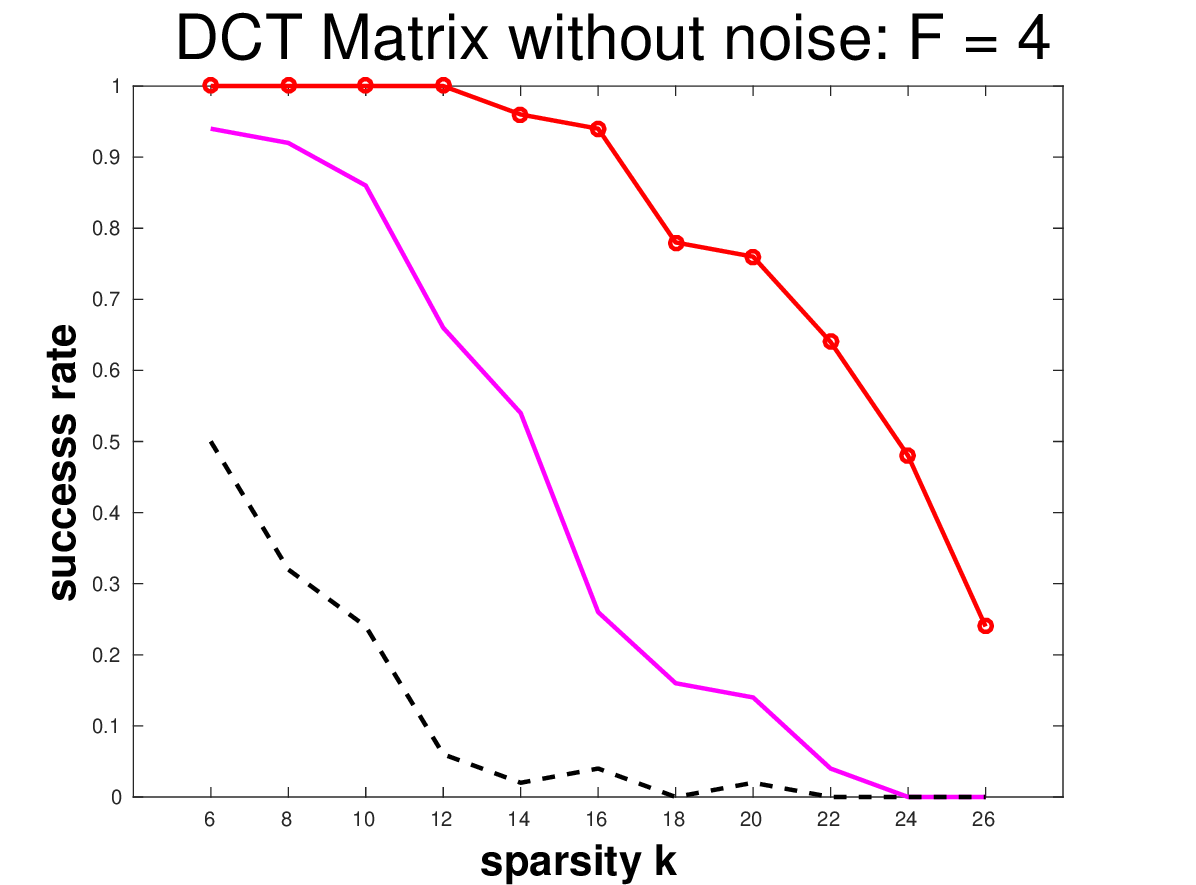}
\end{minipage}  \\
\begin{minipage}[t]{0.4\linewidth}
\includegraphics[scale=0.325]{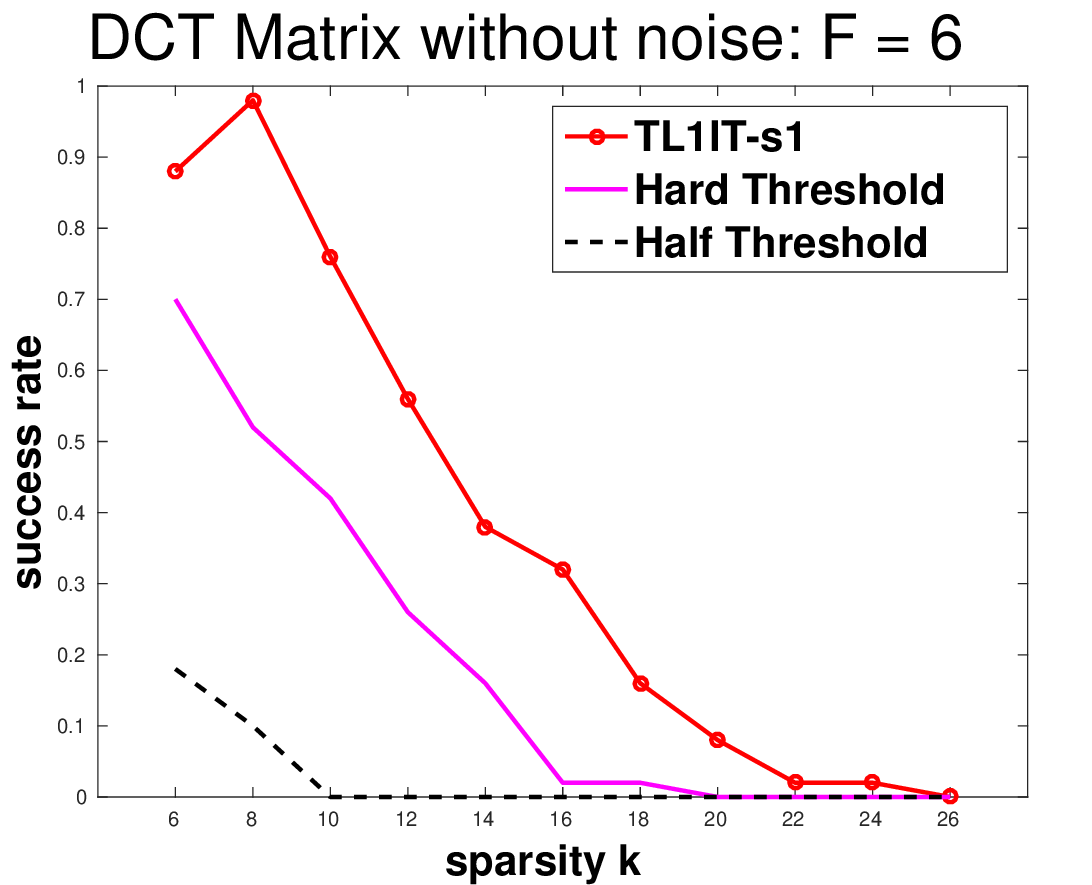}
\end{minipage}  &
\begin{minipage}[t]{0.4\linewidth}
\includegraphics[scale=0.325]{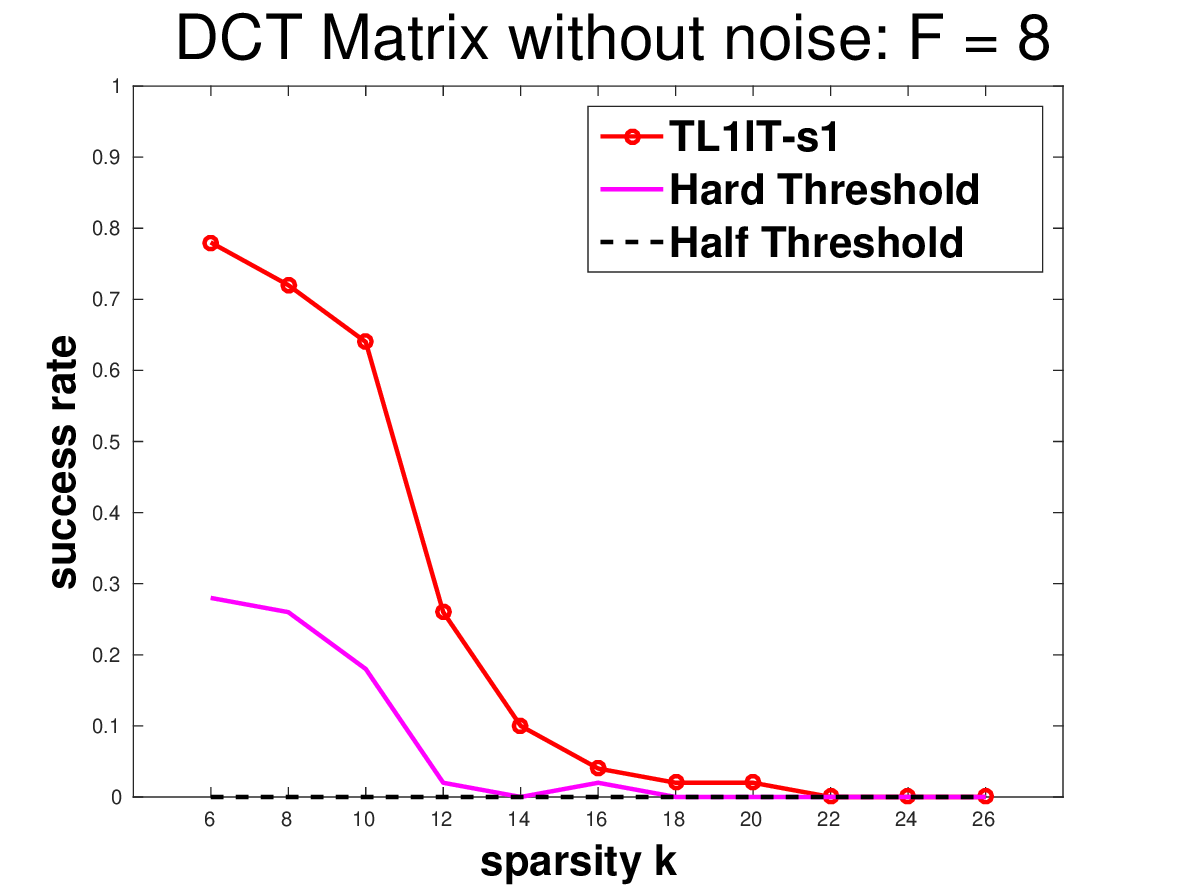}
\end{minipage} 
\end{tabular}
\caption{Algorithm comparison for $100\times 1500$ over-sampled DCT random matrices without noise at different factor $F$.}
\label{figure:DCT}
\end{figure}   

The sparse recovery under such matrices is possible only if the non-zero elements 
of solution $x$ are sufficiently separated. This phenomenon is characterized as 
$minimum$ $separation$ in \cite{Super:candes2013mini}, with minimum length  
referred as the Rayleigh length (RL). The value of RL for matrix $A$ is equal to 
the factor $F$. It is closely related to the coherence in the sense that larger $F$ 
corresponds to larger coherence of a matrix. We find empirically that at least 2RL is 
necessary to ensure optimal sparse recovery with spikes further apart for more 
coherent matrices.

Under the assumption of sparse signal with $2RL$ separated spikes, we compare 
the four non-convex IT algorithms in terms of success rate. 
The sensing matrix $A$ is of size $100 \times 1500$. 
A success is recorded if the relative recovery error is less than 0.001.
The success rate is averaged over 50 random realizations. 
 
Figure \ref{figure:DCT} shows success rates for the four algorithms with increasing 
factor $F$ from 2 to 8. Along with the increasing $F$, 
the success rates for the algorithms
decrease, though at different rates of decline. In all plots, TL1IT-s1 
is the best with the highest success rates.  
At $F=2$, both half thresholding and hard thresholding successfully
recover signal in the regime of small sparsity $k$. However when $F$ becomes
larger,  the half thresholding algorithm deteriorates sharply. 
Especially at $F = 8$, it lies almost flat.

\subsection{Signal Recovery in Noise}  
Let us consider recovering signal in noise based on the model $y = Ax + \varepsilon$, 
where $\varepsilon$ is drawn from independent Gaussian 
$\varepsilon \in \mathcal{N}(0, \sigma^2)$ with $\sigma = 0.01$. 
The non-zero entries of sparse vector $x$ are drawn from  
$\mathcal{N}(0,4)$. In order to recover signal with certain accuracy, the error $\varepsilon$ can not be too 
large. So in our test runs, we also limit the noise amplitude 
as $|\varepsilon|_{\infty} \leq 0.01$.  

\subsubsection*{Gaussian Sensing Matrix}
Here we use the same method in Part B to obtain Gaussian matrix $A$. Parameter $r$ and
sparsity $k$ are in the same set $\{ 0 , \ 0.2 , \ 0.4 , \ 0.5 \}$ and 
$\{ 5 , \ 8 , \ 11 , \ ... , \ 35 \}$.
Due to the presence of noise, it becomes harder to accurately recover the 
original signal $x$. So we 
tune down the requirement for a success to relative error 
$\frac{\|x^r - x\|}{\|x\|} \leq 10^{-2}$.

\begin{figure}
\begin{tabular}{ll}
\begin{minipage}[t]{0.4\linewidth}
\includegraphics[scale=0.325]{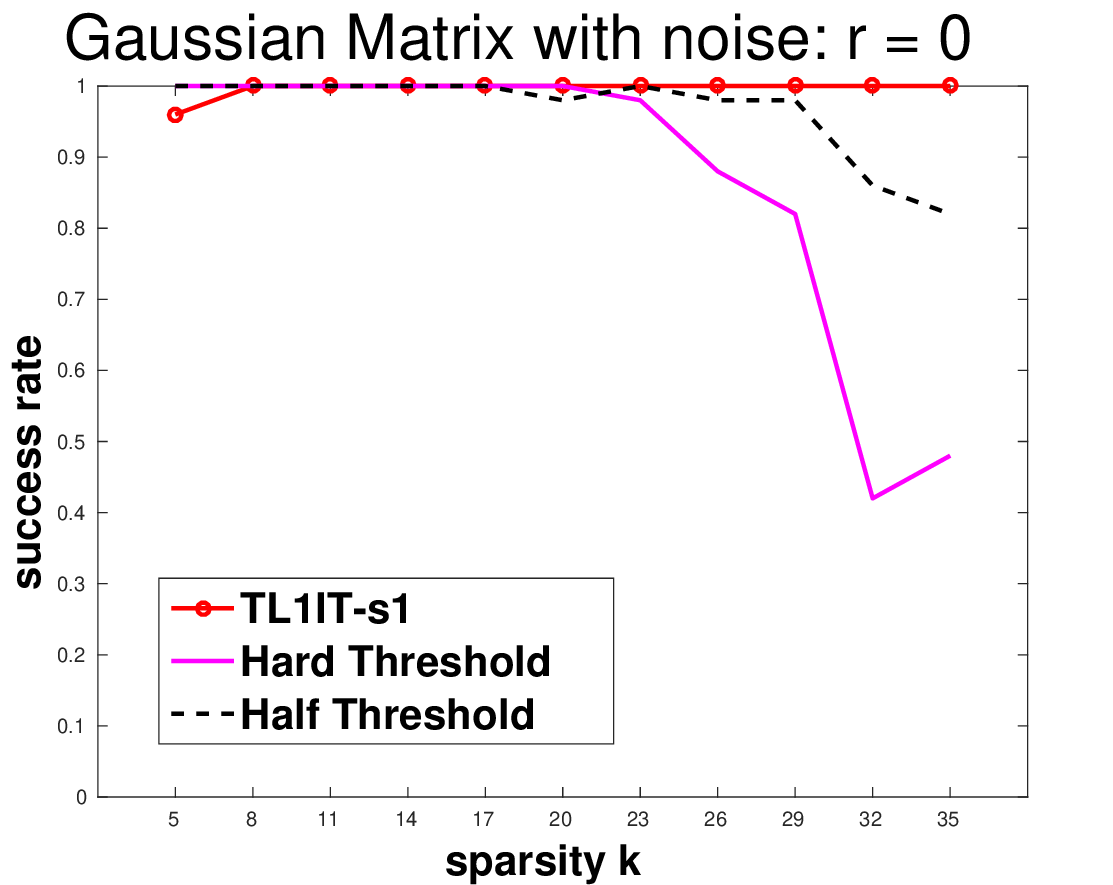}
\end{minipage}  &
\begin{minipage}[t]{0.4\linewidth}
\includegraphics[scale=0.325]{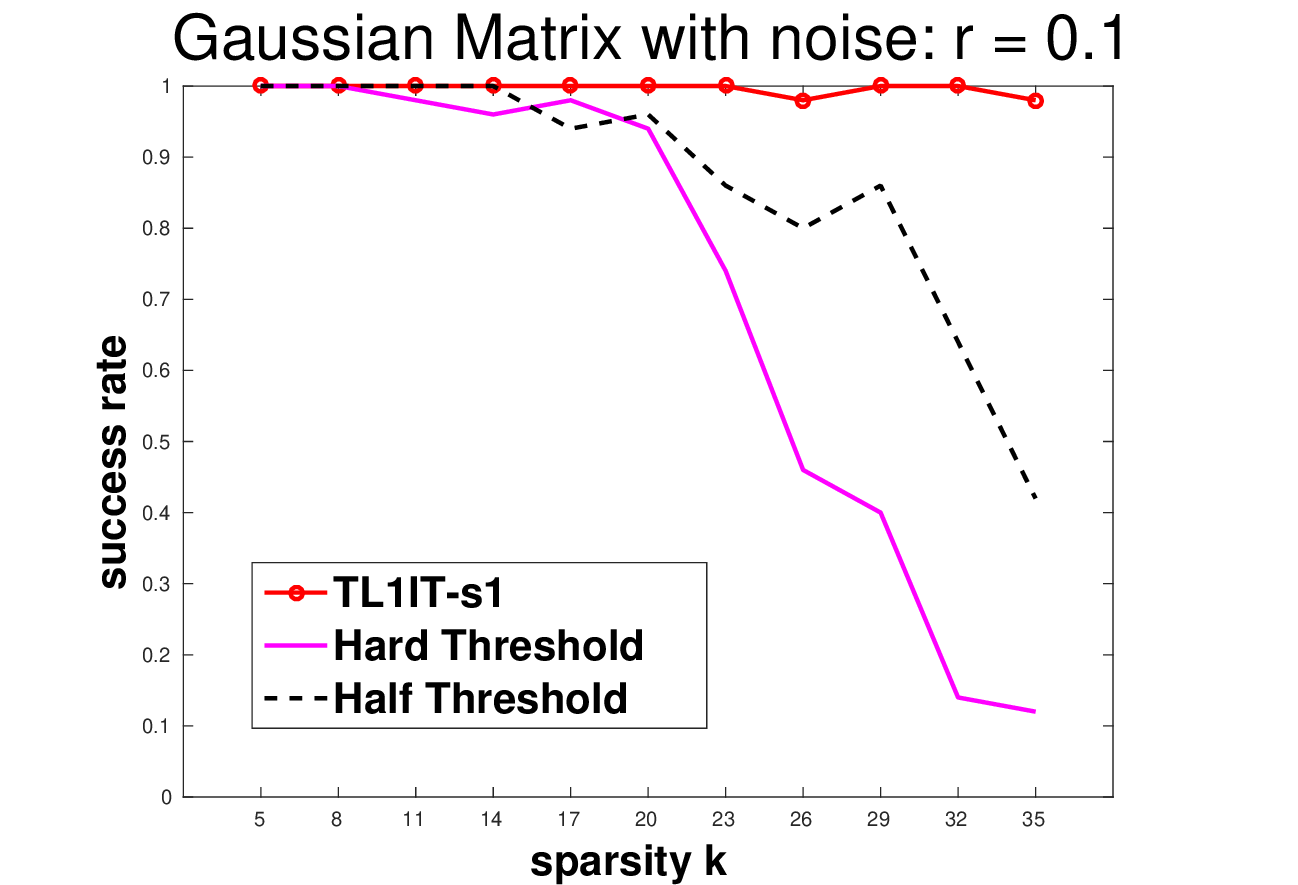}
\end{minipage}  \\
\begin{minipage}[t]{0.4\linewidth}
\includegraphics[scale=0.325]{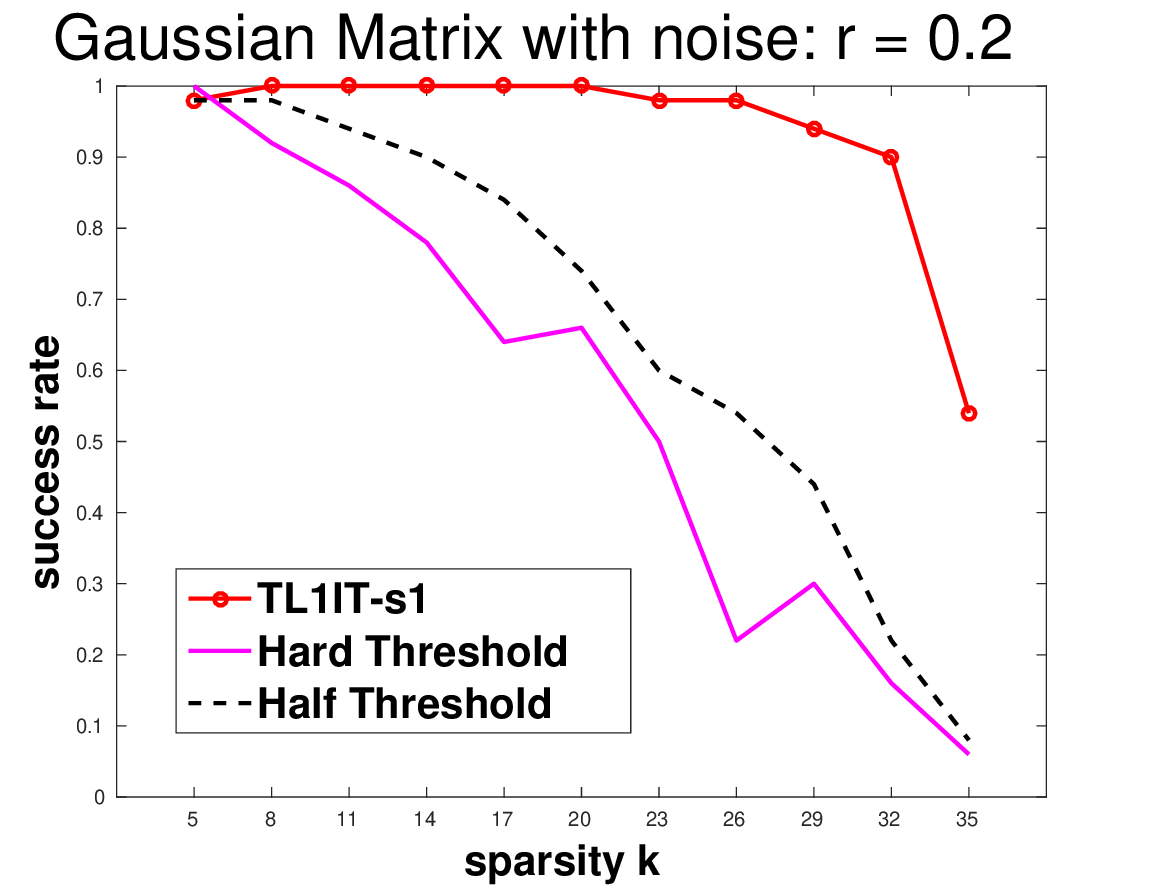}
\end{minipage}  &
\begin{minipage}[t]{0.4\linewidth}
\includegraphics[scale=0.325]{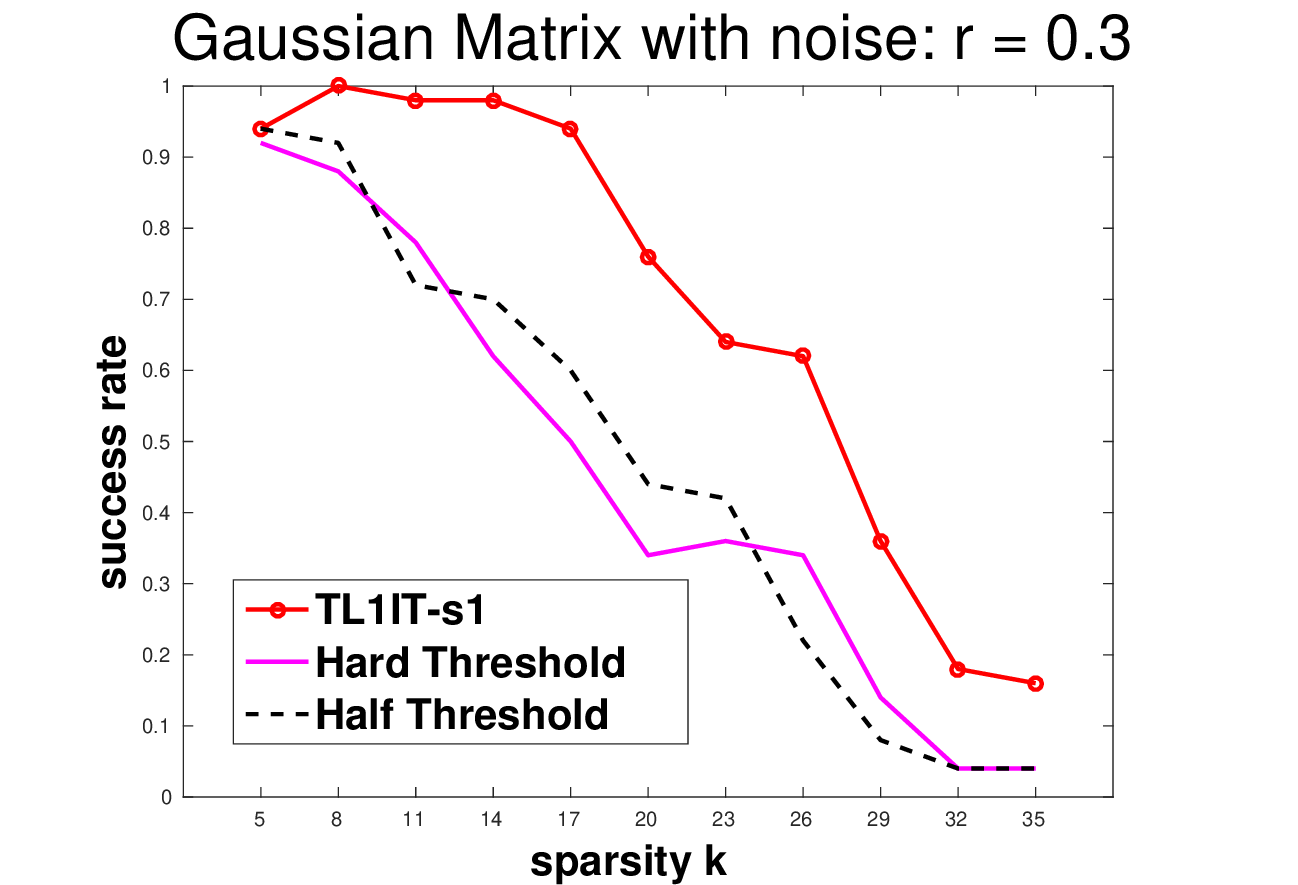}
\end{minipage} 
\end{tabular}
\caption{Algorithm comparison in success rates for $128\times 512$ 
Gaussian sensing matrices with additive noise at different coherence $r$.}
\label{figure:Gaussian noise}
\end{figure}    

The numerical results are shown in Figure \ref{figure:Gaussian noise}. In this experiment, 
TL1IT-s1 again has the best performance, with half thresholding algorithm the second. 
At $r = 0$, TL1IT-s1 scheme is robust and recovers signals successfully in almost 
all runs, which is the same case under both noisy and noiseless conditions.

\subsubsection*{Over-sampled DCT Sensing Matrix}
\FIG\ref{figure:DCT noise} shows results of three algorithms under the 
over-sampled DCT sensing matrices. Relative error of $0.01$ or under qualifies 
for a success. In this case, TL1IT-s1 is also the best numerical method, 
same as in the noise free tests. 
It degrades most slowly under high coherence sensing matrices ($F = 6,8$).  
  
\begin{figure}
\begin{tabular}{ll}
\begin{minipage}[t]{0.4\linewidth}
\includegraphics[scale=0.327]{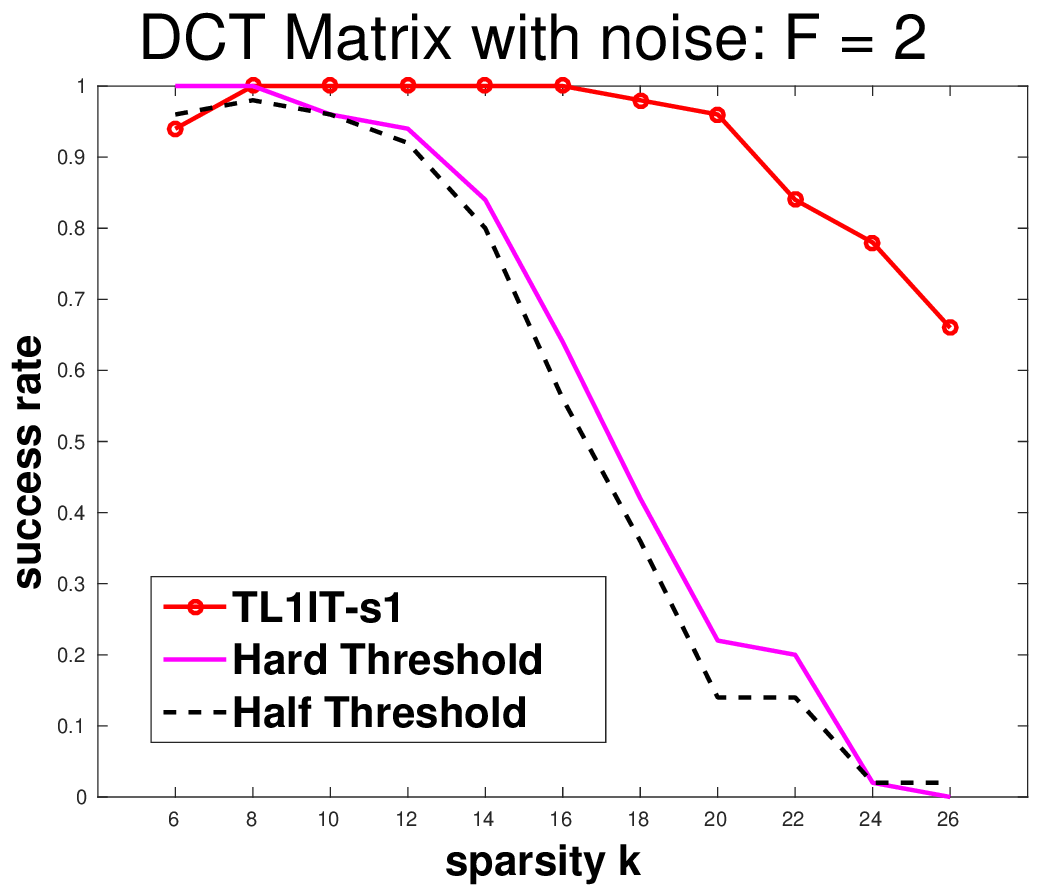}
\end{minipage}  &
\begin{minipage}[t]{0.4\linewidth}
\includegraphics[scale=0.325]{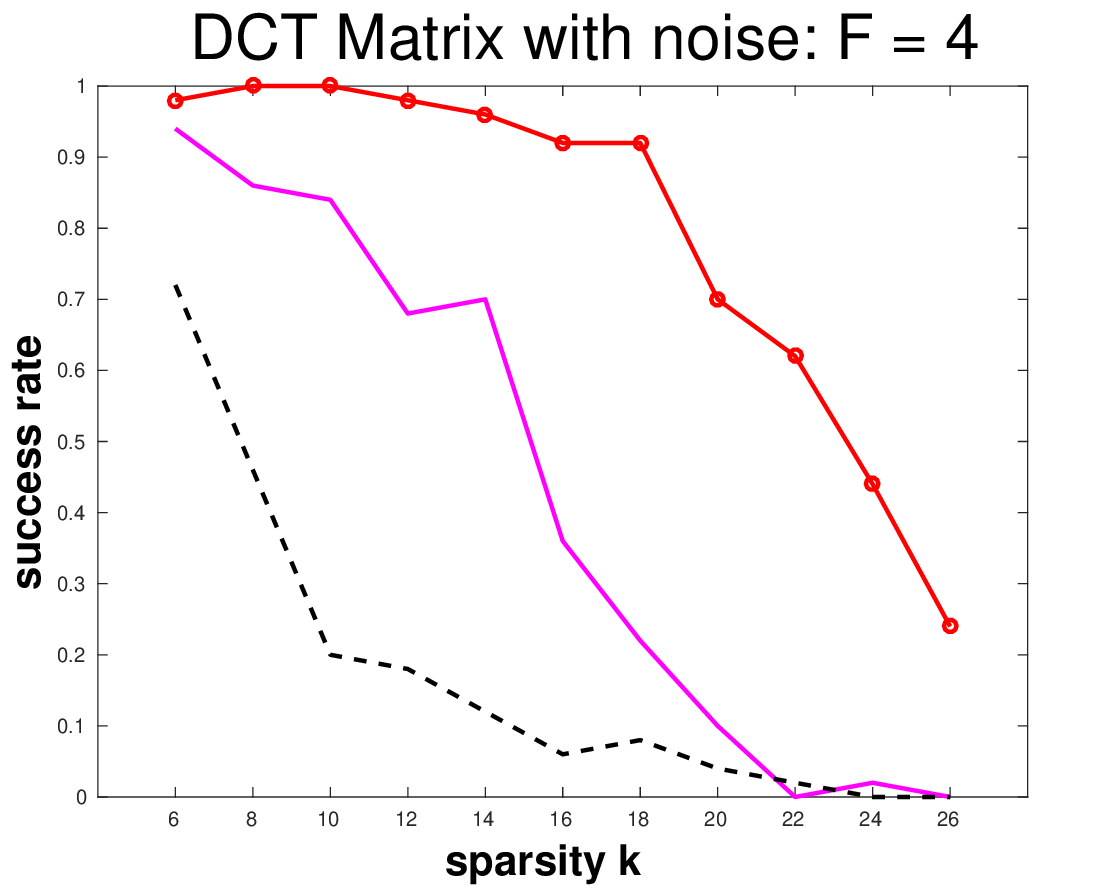}
\end{minipage}  \\
\begin{minipage}[t]{0.4\linewidth}
\includegraphics[scale=0.327]{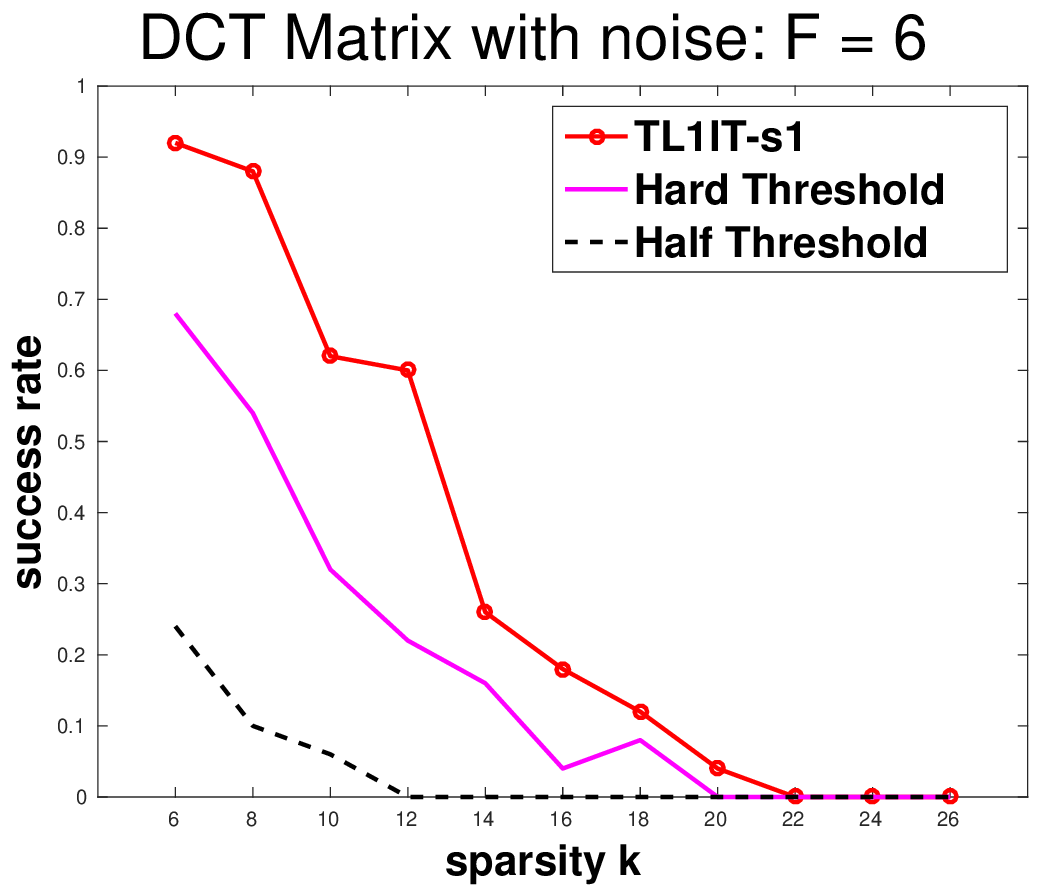}
\end{minipage}  &
\begin{minipage}[t]{0.4\linewidth}
\includegraphics[scale=0.325]{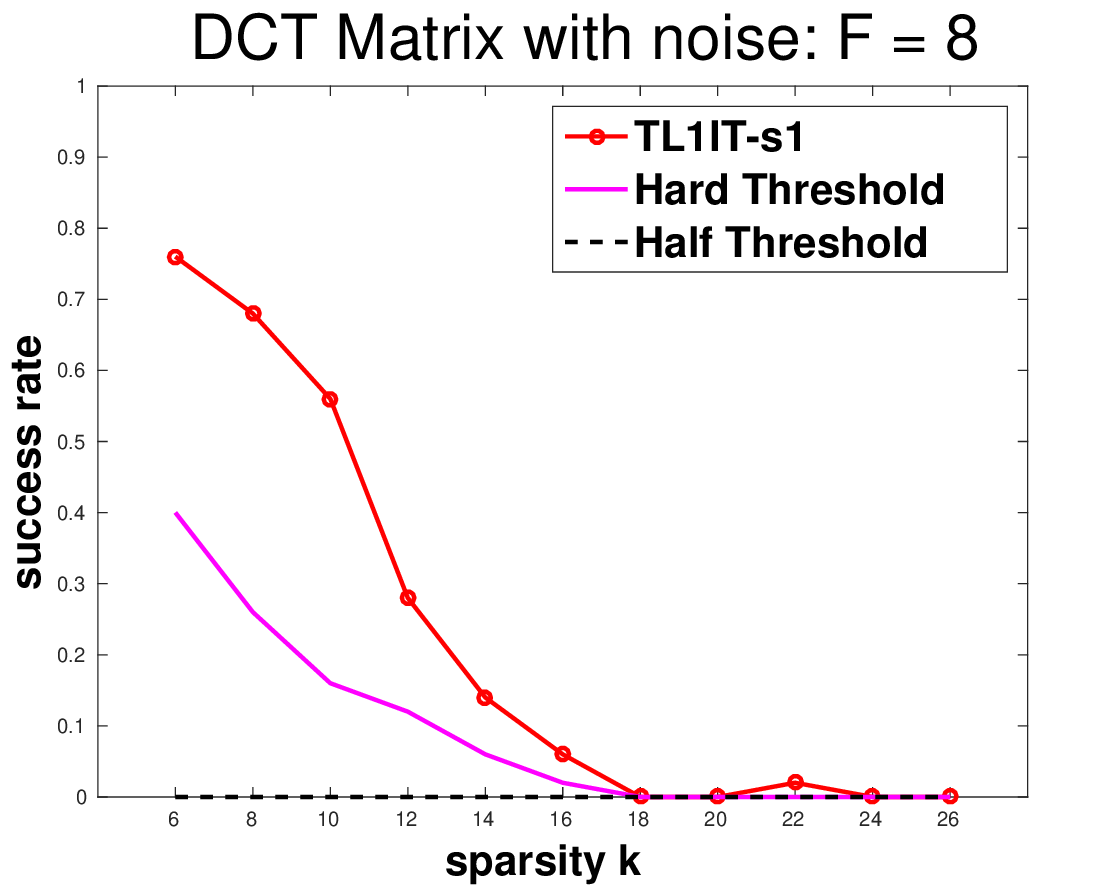}
\end{minipage} 
\end{tabular}
\caption{Algorithm comparison for over-sampled DCT matrices with additive noise: 
$M = 100$, $N = 1500$ at $F=2,4,6,8$.}
\label{figure:DCT noise}
\end{figure}

\subsection{Robustness under Sparsity Estimation}  
In the previous numerical experiments, the sparsity of the problem is known and 
used in all thresholding algorithms. However, in many applications, the sparsity 
of problem may be hard to know exactly. Instead, one may 
only have a rough estimate of the sparsity. 
How is the performance of the TL1IT-s1 when the exact sparsity $k$ is replaced by a rough 
estimate ? 

Here we perform simulations to verify the robustness of TL1IT-s1 algorithm 
with respect to sparsity estimation. Different from previous examples, 
Figure \ref{figure:robust} shows mean square error (MSE), 
instead of relative $l_2$ error. The sensing matrix $A$ is 
generated from Gaussian distribution with $r = 0$. Number of columns, $M$  
varies over several values, while the number of rows, $N$, is fixed at 512. 
In each experiment, we change the sparsity estimation for the algorithm from 60 to 240. 
The real sparsity is $k = 130$. This way, we test the robustness of the TL1IT algorithms 
under both underestimation and overestimation of sparsity.     

In Figure \ref{figure:robust}, we see that TL1IT-s1 scheme is 
robust with respect to sparsity estimation, especially for sparsity over-estimation.  
In other words, TL1IT scheme can withstand the estimation error if 
given enough measurements.

\begin{figure}
\begin{tabular}{lll}
\begin{minipage}[t]{0.30\linewidth}
\includegraphics[scale=0.245]{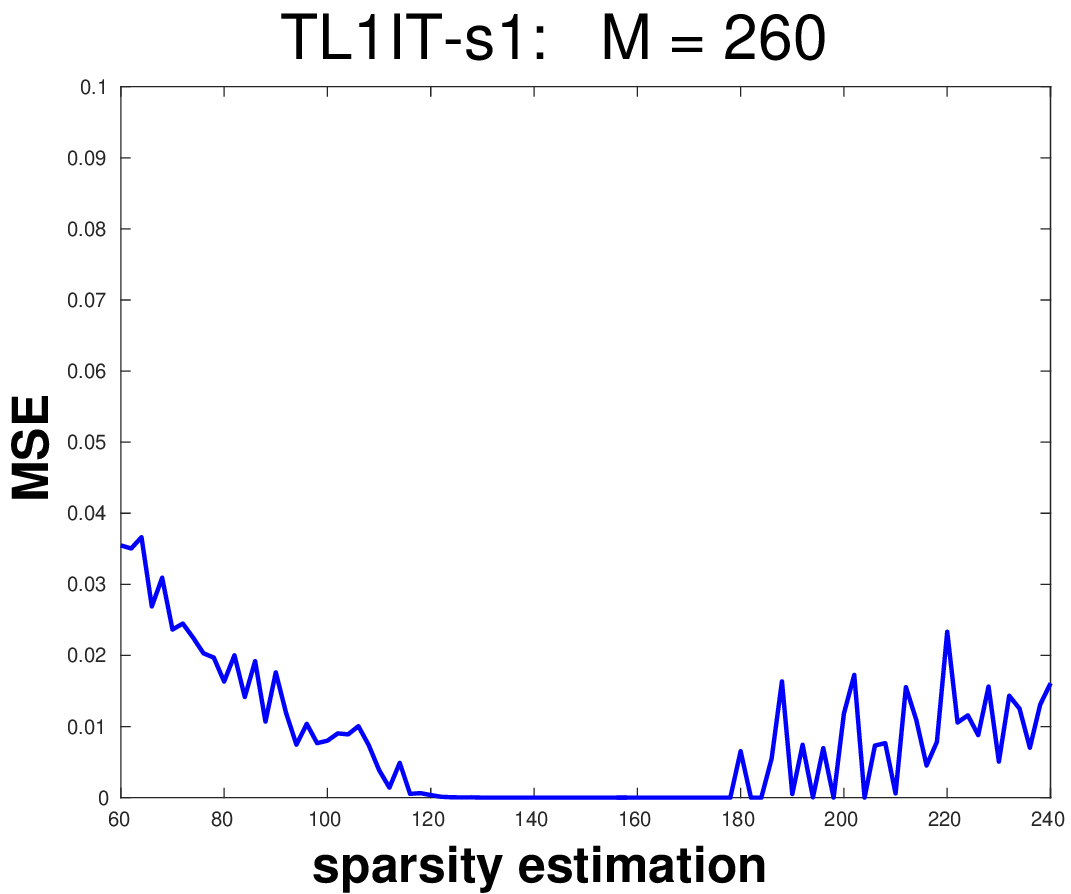}
\end{minipage}  &
\begin{minipage}[t]{0.30\linewidth}
\includegraphics[scale=0.245]{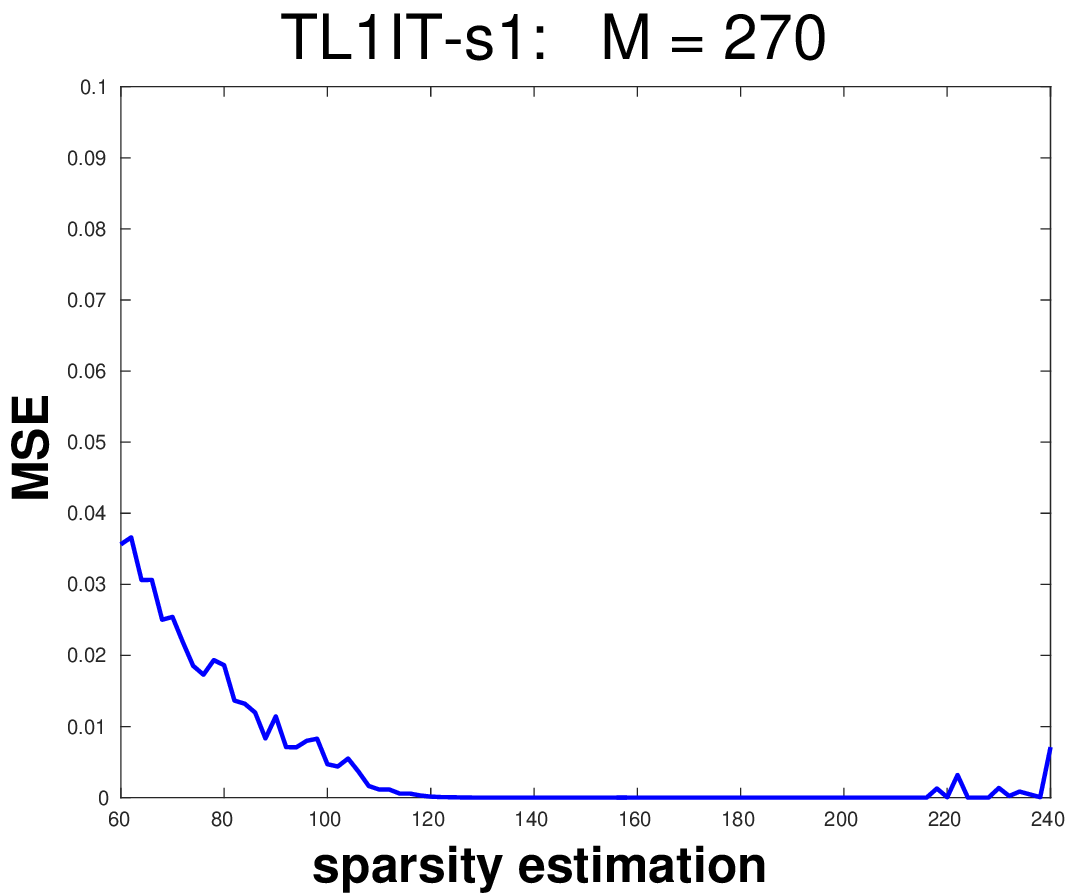}
\end{minipage}  &
\begin{minipage}[t]{0.30\linewidth}
\includegraphics[scale=0.245]{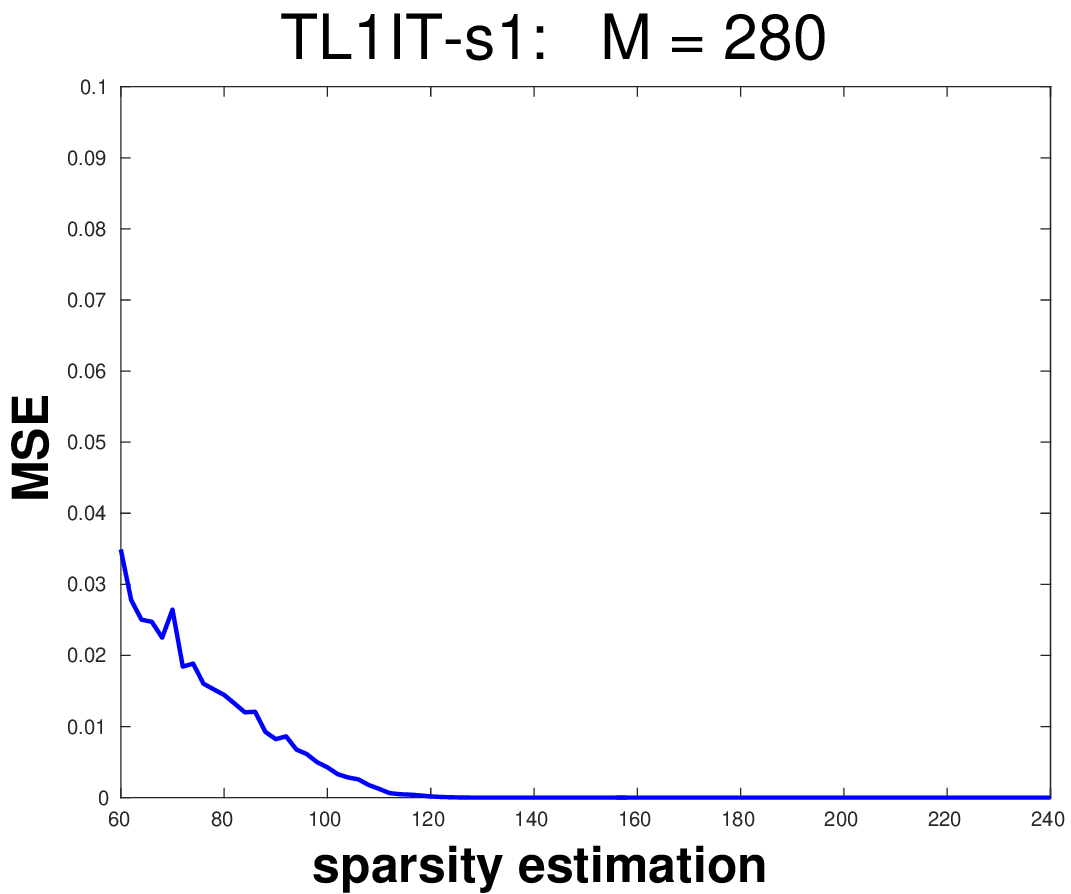}
\end{minipage}  
\end{tabular}
\caption{Robustness tests (mean square error vs. sparsity) for TL1IT-s1 thresholding algorithm 
under Gaussian sensing matrices: $r = 0, N = 512$ and number of measurements $M=260,270,280$. 
The real sparsity is fixed as $k=130$.}
\label{figure:robust}
\end{figure}     

\subsection{Comparison among TL1 Algorithms}
We have proposed three TL1 thresholding algorithms: DFA with fixed parameters, 
semi-adaptive algorithm -- TL1IT-s1 and adaptive algorithm --
 TL1IT-s2. Also in \cite{DCATL1}, we presented a TL1   
difference of convex function algorithm -- DCATL1.
Here we compare all four TL1 algorithms, 
under both Gaussian and Over-sampled DCT sensing matrices. 
For the fixed parameter DFA, we tested two thresholding schemes: 
DFA-s1 for continuous thresholding scheme under  
$\lambda \mu < a^2/2(a+1)$, and DFA-s2 for discontinuous thresholding scheme under   
$\lambda \mu > a^2/2(a+1)$.

\begin{figure}
\begin{tabular}{lr}
\begin{minipage}[t]{0.41 \linewidth}
\includegraphics[scale=0.338]{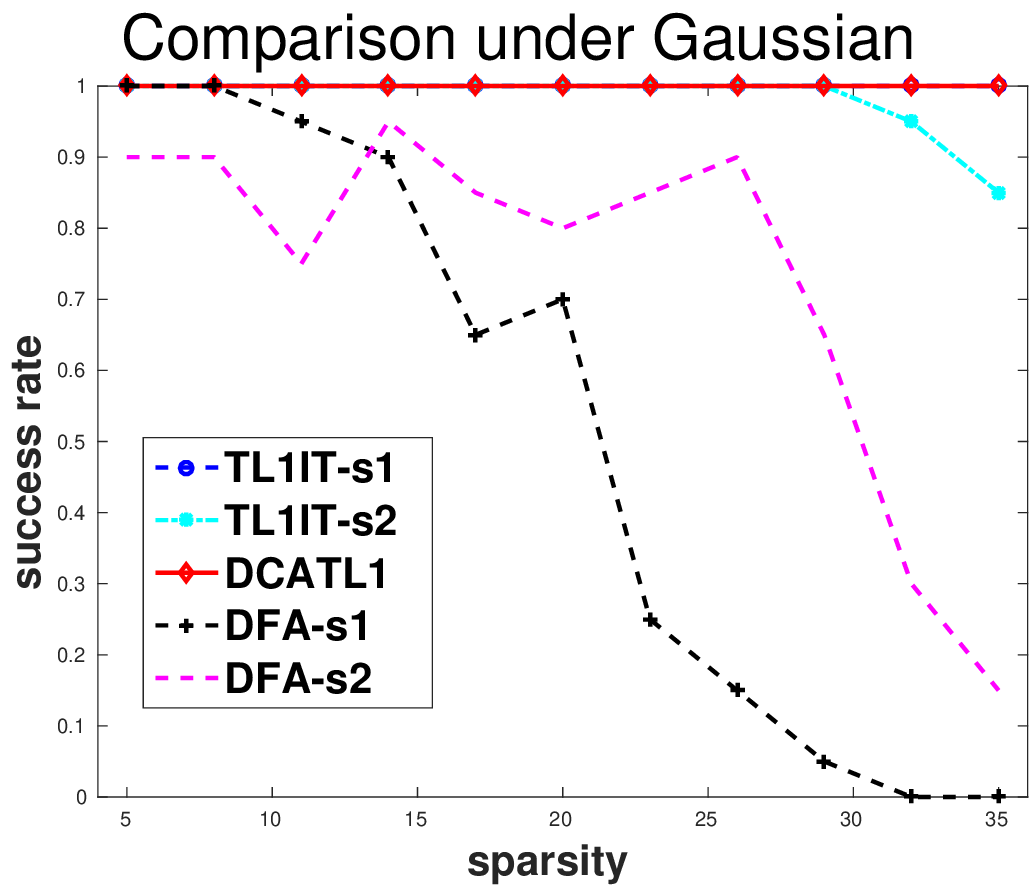}
\end{minipage}  &
\begin{minipage}[t]{0.41 \linewidth}
\includegraphics[scale=0.338]{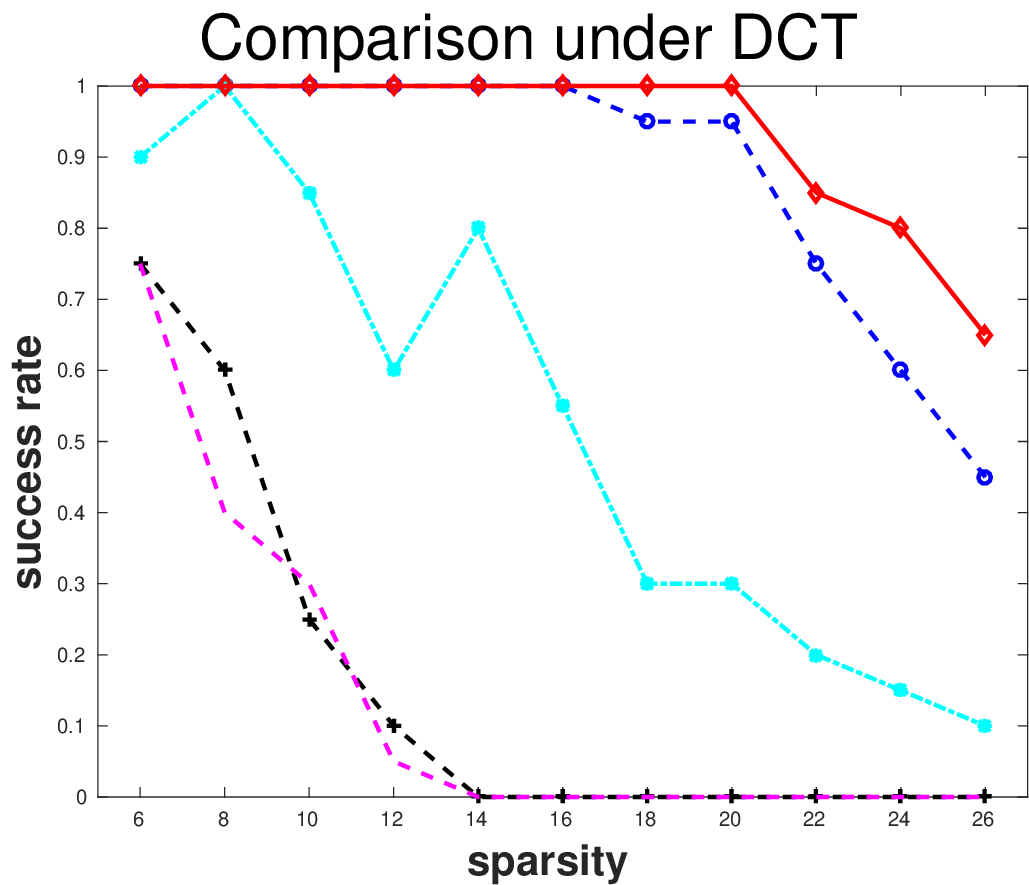}
\end{minipage}  
\end{tabular}
\caption{TL1 algorithms comparison. Y-axis is success rate from 20 random 
tests with accepted relative error $10^{-3}$. X-axis is sparsity value $k$.
Left: $128 \times 512$ Gaussian sensing matrices with sparsity $k = 5, \cdots, 35$. 
Right: $100 \times 1500$ Gaussian sensing matrices with sparsity $k = 6, \cdots, 26.$ 
}
\label{figure: TL1 comparison}
\end{figure}     

In the comparison experiments, we chose Gaussian matrices with 
covariance parameter $r=0$ and Over-sampled DCT matrices with $F=2$. The results
are showed in Figure \ref{figure: TL1 comparison}. 
Under Gaussian sensing matrices, 
DCATL1 and TL1IT-s1 achieved 100 \% success rate to
recover ground truth sparse vector, while TL1IT-s2 failed sometimes when 
sparsity is higher than 28. Also it is interesting to notice that DFA-s2 with
discontinuous thresholding scheme behaved better than DFA-s1, the continuous thresholding scheme. 
For over-sampled DCT sensing tests, DCATL1 is clearly the best among
all TL1 algorithms, with TL1IT-s1 the second. Also the performance of 
TL1IT-s2 declined sharply under this test, which is consistent with our previous
numerical experiments for thresholding algorithms. Due to this fact, 
we only showed TL1IT-s1 in the plots for comparison
with hard and half thresholding algorithms. 

The two adaptive TL1 thresholding algorithms are far ahead of 2 DFA
algorithms, which shows the advantages of adaptivity.
Although DCATL1 out-performed all TL1 thresholding algorithms in the above tests, 
it requires two nested iterations, and an inverse matrix operation,
which is costly for a large size sensing matrix. So 
for large scale CS applications, thresholding algorithms will have their 
advantages, including parallel implementations.
  
\section{Conclusion}  \label{section: conclusion}
We have studied compressed sensing problems with the transformed $l_1$
penalty function for the unconstrained regularization model. 
We established a precise thresholding representation theory with closed form thresholding formula,  
and proposed three iterative thresholding schemes.  
The TL1 thresholding schemes can be either continuous 
(as in soft-thresholding of $l_1$) or discontinuous (as in half-thresholding of
$l_{1/2}$),  depending on whether the parameters belong to the subcritical or
supercritical regime. Correspondingly, there are two parameter setting 
strategies for regularization parameter $\lambda$, 
when the $k$-sparsity problem is solved. 
A convergence theorem is proved for the fixed parameter
TL1 algorithm (DFA).
 
Numerical experiments showed that the semi-adaptive TL1It-s1 algorithm is 
the best performer for sparse signal recovery under sensing matrices with a   
broad range of coherence and under controlled measurement noise. 
TL1IT-s1 is also robust under sparsity estimation error.
 
In a future work, we plan to explore TL1 thresholding algorithms for imaging
science among other higher dimensional problems. 

%

\appendices

\section{Relations of three parameters: $t^*_1$, \ $t^*_2$ \ and \ $t^*_3$}
\begin{equation*}
\left\lbrace \begin{array}{l}
   \vspace{2mm}
   t^*_1 = \dfrac{3}{2^{2/3}} (\lambda a(a+1))^{1/3} -a; \\
   t^*_2 = \lambda \frac{a+1}{a}； \\
   t^*_3 = \sqrt{2\lambda (a+1)} - \frac{a}{2}.
\end{array} \right.
\end{equation*} 
In this appendix, we prove that 
$$t^*_1 \leq t^*_3 \leq t^*_2,$$ 
for all positive parameters $\lambda$ and $a$. 
Also when $ \lambda = \frac{a^2}{2(a+1)}$, they are equal to $\frac{a}{2}$.
\begin{enumerate}
\item $t^*_1 \leq t^*_3$.

Consider the following equivalent relations:
\begin{equation*}
\begin{array}{ll}
t^*_1 \leq t^*_3 
&  \Leftrightarrow \ \
       \dfrac{3}{2^{2/3}} (\lambda a(a+1))^{1/3} \leq \frac{a}{2} + \sqrt{2\lambda (a+1)} \\
&  \Leftrightarrow \ \
       0 \leq (\sqrt{2\lambda(a+1)})^{3} + \frac{a^3}{8} - \frac{15}{4}a(a+1)\lambda + 
\frac{3a^2}{4}\sqrt{2\lambda(a+1)} 
\end{array}
\end{equation*}
Denote $\beta = \sqrt{\lambda}$, then function 
$P(\lambda) = (\sqrt{2\lambda(a+1)})^{3} + \frac{a^3}{8} - \frac{15}{4}a(a+1)\lambda 
+ \frac{3a^2}{4}\sqrt{2\lambda(a+1)}$ can be rewriten as a cubic polynomial of $\beta$: 
$$\beta^3 (2(a+1))^{3/2} - \beta^2 \frac{15}{8}a(2(a+1)) + \beta \frac{3a^2}{4}\sqrt{2(a+1)}
+ \frac{a^3}{8}.$$ 
This polynomial can be factorized as
$$ (2(a+1))^{3/2} \left(\beta - \frac{a}{\sqrt{2(a+1)}}\right)^2 
                  \left(\beta + \frac{a}{8\sqrt{2(a+1)}}\right).$$ 
Thus for nonnegative parameter $\lambda = \beta^2$, it is always true that 
$P(\lambda) \geq 0$. Therefore, we have $t^*_1 \leq t^*_3$. They are equal to $a/2$ if and only if 
$\lambda = \frac{a^2}{2(a+1)}$.

\item $t^*_3 \leq t^*_2$. 

This is because
\begin{equation*}
\begin{array}{ll}
t^*_3 \leq t^*_2 
&  \Leftrightarrow \ \ \sqrt{2\lambda (a+1)} \leq \frac{a}{2} + \lambda \frac{a+1}{a} \\
&  \Leftrightarrow \ \ 2\lambda (a+1) \leq \frac{a^2}{4} + \lambda (a+1) 
                       + \lambda^2 \frac{(a+1)^2}{a^2} \\
&  \Leftrightarrow \ \ 0 \leq \left( \frac{a}{2} - \lambda \frac{a+1}{a} \right)^2.
\end{array}
\end{equation*}
So inequality $t^*_3 \leq t^*_2$ holds. 
Further, $t^*_3 = t^*_2 = a/2$ if and only if $\lambda = \frac{a^2}{2(a+1)}$. 
\end{enumerate}

\section{Formula of optimal value $y^*$ when $\lambda > \frac{a^2}{2(a+1)}$ \ and \ $t^*_1 < x < t^*_2$}
Define function $w(x) = x - g_{\lambda}(x) - \frac{a}{2}$, where
\begin{equation*}
g_{\lambda}(x) = sgn(x) \left\{ \frac{2}{3}(a+|x|)\, \cos(\frac{\varphi(x)}{3}) 
                                 -\frac{2a}{3} + \frac{|x|}{3} \right\}
\end{equation*}
with $ \varphi(x) = \arccos( 1 - \frac{27\lambda a(a+1)}{2(a+|x|)^3} )$.

\begin{enumerate}
\item
First, we need to check that $x = t^*_3$ indeed is a solution for equation
$w(x) = 0$. 

Since $\lambda > \frac{a^2}{2(a+1)}$, $t_3^* = \sqrt{2\lambda (a+1)} - \frac{a}{2} > 0$. 
Thus:
\begin{equation*}
\begin{array}{ll}
\vspace{2mm}
\cos(\varphi(t_3^*)) & = 1 - \dfrac{27\lambda a(a+1)}{2(a+t_3^*)^3} \\
                    & = 1 - \dfrac{27\lambda a(a+1)}{2( \frac{a}{2}+ \sqrt{2\lambda(a+1)} )^3}.
\end{array}
\end{equation*}
Further, by using the relation $\cos(\varphi) = 4\cos^3(\varphi/3) - 3\cos(\varphi/3)$ and
$ 0 \leq \varphi/3 \leq \frac{\pi}{3}$, 
we have 
\begin{equation*}
\cos\left(\frac{\varphi(t_3^*)}{3}\right) = 
\dfrac{ \sqrt{2\lambda(a+1)} - a/4 }{ a/2 + \sqrt{2\lambda(a+1)} }.
\end{equation*}
Plugging this formula into $g_{\lambda}(t^*_3)$ shows that 
$ g_{\lambda}(t^*_3) = \sqrt{2\lambda (a+1)} - a = t^*_3 - a/2$.
So $t^*_3$ is a root for function $w(t)$ and 
$t_3^* \in (t_1^*, t_2^*)$.

\item 
Second we prove that the function $w(x)$ changes sign at $x = t_3^*$.

Notice that according to Lemma \ref{lem: roots for poly} , $g_{\lambda}(x)$
is the largest root for cubic polynomial 
$P(t) = t(a+t)^2 - x(a+t)^2 + \lambda a(a+1)$, if $x > t^*_1$.  

Take $t=x$, we know $P(x) = \lambda a(a+1) > 0$. 
Let us consider the value of $P(x-a/2)$. It is easy to check that: 
$P(x-a/2) < 0 \Leftrightarrow x > t^*_3$.

\begin{enumerate}
\item $x \in (t^*_3, t^*_2)$. 

We will have $P(x-a/2) < 0$ and $P(x) > 0$. 
While also the largest solution of $P(t) = 0$ is $t = g_{\lambda}(x) < x$. 
Thus we are sure that $ g_{\lambda}(x) \in (x-a/2, x)$, and then 
$ x - g_{\lambda}(x) < a/2 \Rightarrow w(x) < 0.$ 
So the optimal value is $y^* = y_0 = g_{\lambda}(x)$.

\item $x \in (t^*_1, t^*_3)$. We have $P(x-a/2) > 0$ and $P(x) > 0$. 
Due to the proof of Lemma \ref{lem: roots for poly}, one possible situation is that
there are two roots $y_0$ and $y_1$ within interval $(x-a/2,x)$. But we can exclude this 
case. This is because, by formula (\ref{equ: three roots}), 
\begin{equation}
\begin{array}{ll}
\vspace{2mm}
y_0 - y_1 
& = \frac{2(a+x)}{3} \left\{ \cos(\varphi /3) - \cos(\varphi/3 + \pi/3) \right\} \\
\vspace{2mm}
& = \frac{2(a+x)}{3} \left\{ 2 \sin(\varphi/3 + \pi/6) \sin(\pi/6) \right\} \\
\vspace{2mm}
& = \frac{2(a+x)}{3} \sin(\varphi/3 + \pi/6).
\end{array}
\end{equation}
Here $ \varphi/3 \in [\pi/6, \pi/2]$. So $y_0 - y_1 \geq \frac{(a+x)}{3}$. 
Also we have $x > t^*_1 > a/2$ when $\lambda > \frac{a^2}{2(a+1)}$. 
Thus $$y_0 - y_1 > a/2,$$ which is in contradiction with the assumption that 
both $y_0$ and $y_1 \in (x-a/2,x)$. 
So there are no roots for $P(t) = 0$ in $(x -a/2,x)$. 
Then we know $y_0 = g_{\lambda}(x) < x-a/2$. That is to say, $w(x) > 0$, so the optimal 
value is $y^* = 0$. 
\end{enumerate}

\end{enumerate}

\section{Continuity of TL1 threshold function at $t^*_2$ when $\lambda \leq \frac{a^2}{2(a+1)}$}

Threshold operator $H_{\lambda, a}(\cdot)$ is defined as
\begin{equation*}
H_{\lambda, a}(x) = \left\{
\begin{array}{ll}
0, & \quad  \text{if} \ \ |x| \leq t; \\
g_{\lambda}(x), & \quad \text{if} \ \ |x| > t.
\end{array} \right.
\end{equation*}
When $\lambda \leq \frac{a^2}{2(a+1)}$, threshold value 
$t = t^*_2 = \lambda \frac{a+1}{a}$.

To prove continuity as shown in \FIG\ref{figure: threshold plot}, the satisfaction of
condition: $g_{\lambda}(t^*_2) = g_{\lambda}(-t^*_2) = 0$ is sufficient.

According to formula (\ref{func: g formula}), we substitute $x = \lambda \frac{a+1}{a}$ 
into function $\varphi(\cdot)$, then
\begin{equation*}
\begin{array}{ll}
\vspace{2mm}
\cos(\varphi) & = 1 - \dfrac{27\lambda a (a+1)}{ 2(a+ x)^3 } \\
              & = 1 - \dfrac{ 27\lambda a(a+1) }{2(a + \lambda \frac{a+1}{a})^3}.
\end{array}
\end{equation*} 
\begin{enumerate}
\item Firstly, consider $\lambda = \frac{a^2}{2(a+1)}$. Then $x = t^*_2 = \frac{a}{2}$, 
so $\varphi = \arccos(-1) = \pi$. Thus $\cos(\varphi/3) = \frac{1}{2}$. 
By taking this into function $g_{\lambda}$, it is easy to check that $g_{\lambda}(t^*_2) = 0$.

\item Then, suppose $\lambda < \frac{a^2}{2(a+1)}$. In this case, $x = t^*_2 > t^*_1$, 
so we have inequalities 
$$ 
-1 < d = \cos(\varphi) = 1 - \dfrac{ 27\lambda a(a+1) }{2(a + \lambda \frac{a+1}{a})^3} < 1.
$$
From here, we know $ \cos(\frac{\varphi}{3}) \in (\frac{1}{2},1)$. 

Due to triple angle formula: 
$4\cos^3(\frac{\varphi}{3}) - 3\cos(\frac{\varphi}{3}) = \cos(\varphi) = d$, 
let us define a cubic polynomial $c(t) = 4t^3 - 3t - d$. Then we have:
$c(-1) = -1 -d < 0$, $c(-1/2) = 1 -d > 0$, $c(1/2) = -1 -d < 0$ and $c(1) = 1 -d > 0$.
So there exist three real roots for $c(t)$, and only one root is located in $(1/2,1)$.

Further, we can check that $t^* = \dfrac{a - \frac{\lambda(a+1)}{2a}}{a + \frac{\lambda(a+1)}{a}}$
is a root of $c(t) = 0$ and also under the condition $\lambda < \frac{a^2}{2(a+1)}$, 
$\frac{1}{2} < t^* < 1$. From above discussion and triple angle formula, we can figure out 
that $\cos(\frac{\varphi}{3}) = \dfrac{a - \frac{\lambda(a+1)}{2a}}{a + \frac{\lambda(a+1)}{a}}$. 
Further, it is easy to check that 
$g_{\lambda}(t^*_2) = 0$.

\end{enumerate}

\newpage

\end{document}